\newtheorem{theorem}{Theorem}[section]
\newtheorem{assumption}{Assumption}
\DeclareTextFontCommand{\textmyfont}{\footnotesize}
\newenvironment{example}[1][Example]{\begin{trivlist}
\item[\hskip \labelsep {\bfseries #1}]}{\end{trivlist}}
\newenvironment{remark}[1][Remark]{\begin{trivlist}
\item[\hskip \labelsep {\bfseries #1}]}{\end{trivlist}}
\newcommand\floor[1]{\lfloor#1\rfloor}
\begin{document}

\title{Modeling Spectral Properties in Stationary Processes of Varying Dimensions with Applications to Brain Local Field Potential Signals 
\footnote{AMS subject classification. Primary: 62M10. Secondary: 62M15.}
\footnote{Keywords and phrases: Multivariate time series, nonstationary, spectral matrix, local field potential}
\footnote{This work is support in part by KAUST, NIH NS066001, Leducq Foundation 15CVD02
and NIH MH115697. }}

\author{
Raanju Ragavendar\ Sundararajan \\ Southern Methodist University  \\
Ron D. Frostig \\ University of California, Irvine  \\
\and Hernando Ombao\\ King Abdullah University of Science and Technology}

%\vspace{0.3cm}

\date{}

\maketitle

\newpage

\begin{abstract}

\noindent A common class of methods for analyzing of multivariate time series, stationary and 
nonstationary, decomposes the observed series into latent sources. Methods such as principal 
compoment analysis (PCA),  independent component analysis (ICA) and Stationary Subspace Analysis (SSA) 
 assume the observed multivariate process is generated by latent sources that are stationary or 
nonstationary. We develop a method that tracks changes in the complexity of a 32-channel 
local field potential (LFP) signal from a rat following an experimentally induced stroke. We study 
complexity through the latent sources and their dimensions that can change  across epochs 
due to an induced shock to the cortical system. 
Our method compares the spread of spectral information in several multivariate stationary processes with different dimensions. A frequency specific spectral ratio (FS-ratio) statistic is proposed and its asymptotic properties are derived. The FS-ratio is 
blind to the dimension of the stationary process and captures the proportion of spectral information in various (user-specified) frequency bands. We 
apply our method to study differences in complexity and structure of the LFP before and after system 
shock. The analysis indicates that spectral information in the 
beta frequency band (12-30 Hertz) demonstrated the greatest change in structure and 
complexity due to the stroke.
  
\end{abstract}

\hrule
\hrulefill

\section{Introduction}

A common class of methods for modeling multivariate time series data decomposes the observed series into latent sources that can be stationary or nonstationary. The goal in this paper is to develop a method that tracks 
changes in the complexity of signals following a shock that is induced on a biological system. In particular,  
the proposed method will be used to study changes in the rat's brain functional network resulting from an 
induced stroke in an experiment conducted by co-author (R. D. Frostig) at the Neurobiology laboratory at 
UC Irvine. Here we shall characterize complexity in local field potentials (LFPs) through the latent sources 
and their evolving dimension. Figure \ref{fig:ratbrain} below depicts the rat's cortex and the locations of the 
32 sensors implanted on the cortical surface from which the LFP signal is recorded. This 32-dimensional signal 
is our observed time series.  

\begin{figure}[H] 
\centering
\includegraphics[scale=0.95]{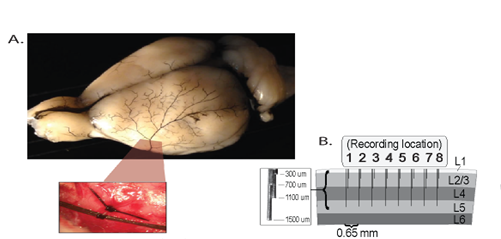}
\caption{ Visual representation of the 32 microelectrodes on the rat's cortex from which the local field potential (LFP) signal is recorded. The distance between microelectrodes is 0.65mm and the total distance between microelectrode 1 and microelecteode 8 is 3.9mm.} \label{fig:ratbrain}
\end{figure}

\noindent  The local field potential signals from the experiment will be modeled as
\begin{equation}\label{e:decomp_neuro_exp}
X_{i,t} \; = \; A_i Y_{i,t} \; + \; Z_{i,t},
\end{equation}
where $i$ is the indicator of the epoch ($i=1,2,\cdots,N$), $A_i$ is the unknown mixing matrix for epoch $i$, $Y_{i,t} \in \mathbb{R}^{d_i}$ are the latent sources of interest in epoch $i$ (an epoch is a 1-second block of 
LFP) and $Z_{i,t}$ is the nonstationary sources. The interest in obtaining the latent $Y_{i,t}$ can be viewed from different perspectives depending on the end objective of the statistical problem.  A few examples include the classical dynamic PCA for time series from \citet{brillinger81}, PCA in the multivariate time series setting (\citet{stockwatson}, \citet{slex_ombao}, \citet{ombao_2006}, \citet{yao2018}), factors models and ICA (\citet{lam2012}, \citet{mattesondoc}, \citet{ombao_motta_2012}). The aim of these current methods 
is primarily in simplifying the analysis of multivariate time series $X_{i,t}$ in \eqref{e:decomp_neuro_exp} 
by producing summaries which are a few useful independent/orthogonal components or factors $Y_{i,t}$. 
Stationary subspace analysis (SSA), introduced by \citet{ssa09} and studied further by \citet{sundararajan:2017}, 
is another related method that decomposes an observed multivariate nonstationary time series $X_{i,t}$ 
into stationary $Y_{i,t} \in \mathbb{R}^{d_i}$ and nonstationary $Z_{i,t} \in \mathbb{R}^{p-d_i}$ 
components. However, unlike PCA and ICA, the latent components in SSA are not constrained to be 
independent/orthogonal. This is a major advantage because it gives a more realistic (less constrained) 
description of observed brain processes. Furthermore, the SSA framework would rightly treat the observed 
brain signal as a nonstationary process ( \citet{slex_ombao}, \citet{srinivasan_2003}, 
\citet{srinivasan_2006}, \citet{SSAbci}, \citet{wu_2016},  \citet{gao_2018}, \citet{euan_2019}). 

Irrespective of whether one is interested in PCA, factor modeling, ICA, SSA, the dimension $d_i$ of these 
latent sources $Y_{i,t}$ should be allowed to change across  $i=1, 2, \ldots, N$ epochs. Artificially setting 
the dimension to be the same across the epochs results in loss of useful information since these changes 
could be indicative of useful brain process such as learning (\citet{fiecas_2016}). Indeed 
brain processes evolve across the entire recording period (\citet{fiecas_2016}, \citet{ombao_2018}) and 
thus $d_i$ should be allowed to change across epochs $i$. Moreover, the evolution of the $d_i$ can itself 
serve as a feature in understanding how the brain function evolves during an experiment. 

The application that motivates our methodology is the analysis of local field potentials (LFP) in an experiment that simulates ischemic stroke in humans.\footnote{Data source: Stroke experiment conducted in the lab of co-author (Ron Frostig) at his Neurobiology lab; \texttt{http://frostiglab.bio.uci.edu/Home.html}}. The dataset comprises of 600 epochs worth of LFP recordings (each epoch is 1 second long)  from 32 microelectrodes implanted in a rat's cortex. A stroke is induced midway through the experiment (epoch 300) by severing the medial cerebral artery. In Figure \ref{fig:pvalue_component_plot}, we present the p-values from a test of second-order stationarity carried out on each of the $p=32$ microelectrodes at each epoch. We notice that these individual microelectrodes are more stationary after the stroke than before and this shift suggests a varying dimension $d_i$ of $Y_{i,t}$ in model \eqref{e:decomp_neuro_exp}. In Figure \ref{fig:lfp_dimension_plot}, we apply SSA and plot the estimates of the stationary subspace dimension $d_i$ across $N=600$ epochs using the method in \citet{sundararajan2019}. We notice the varying dimension estimates across the 600 epochs thereby making comparison of $Y_{i,t} \in \mathbb{R}^{d_i}$ across $i=1,2,\cdots,N$ epochs difficult. For example, it is non-trivial and challenging to compare the spectrum of $Y_{i,t} \in \mathbb{R}^{d_i}$ and $Y_{j,t} \in \mathbb{R}^{d_j}$ for two different epochs $i$ and $j$ when $d_i \neq d_j$.

\begin{figure}[h]
\centering
\includegraphics[scale=0.42]{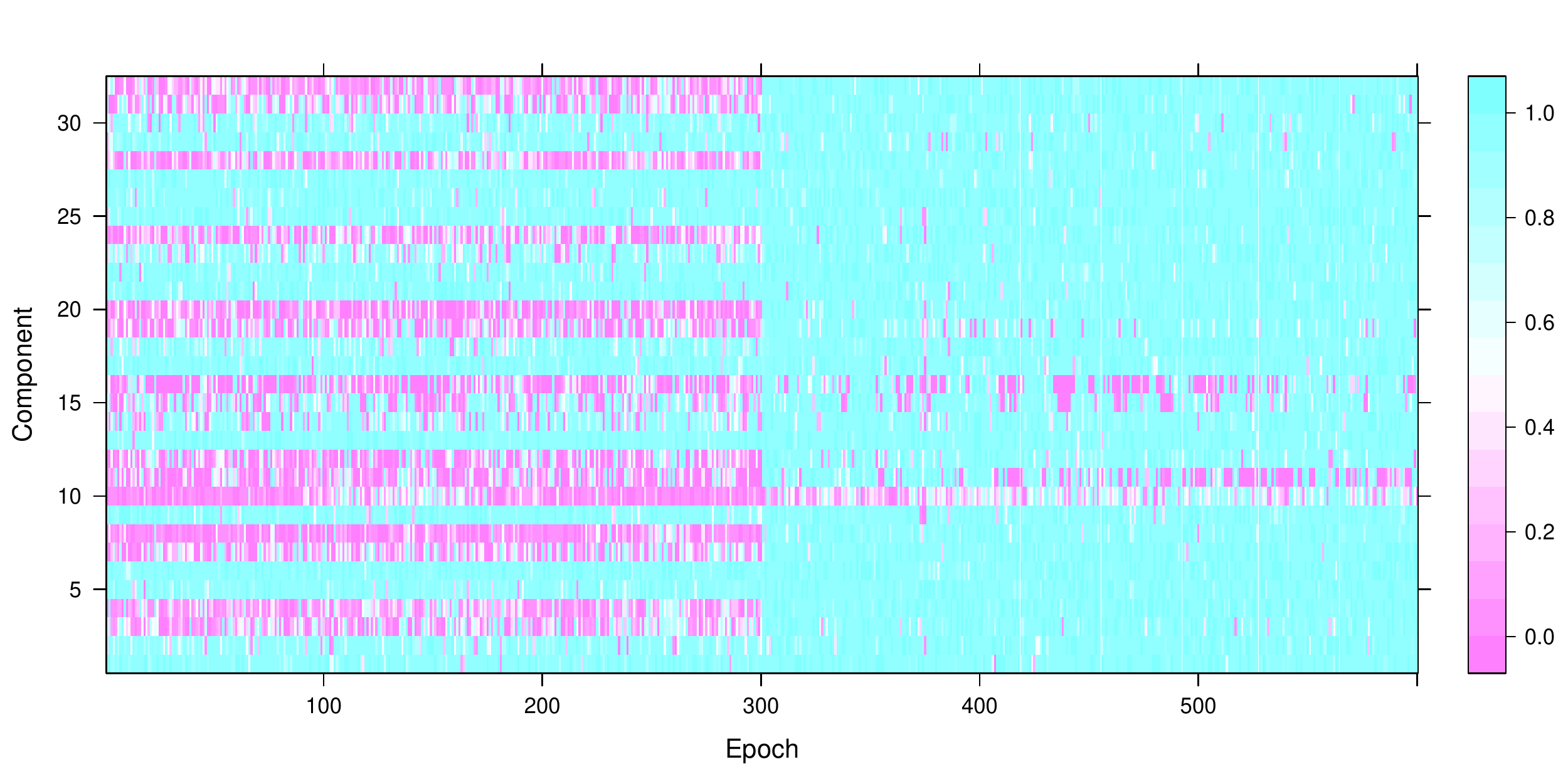}
\caption{ p-values from the test of second-order stationarity on each of the $p=32$ LFP microelectrodes  (y-axis) for all 600 epochs (x-axis).   } \label{fig:pvalue_component_plot}
\end{figure}
\begin{figure}[h]
\centering
\includegraphics[scale=0.42]{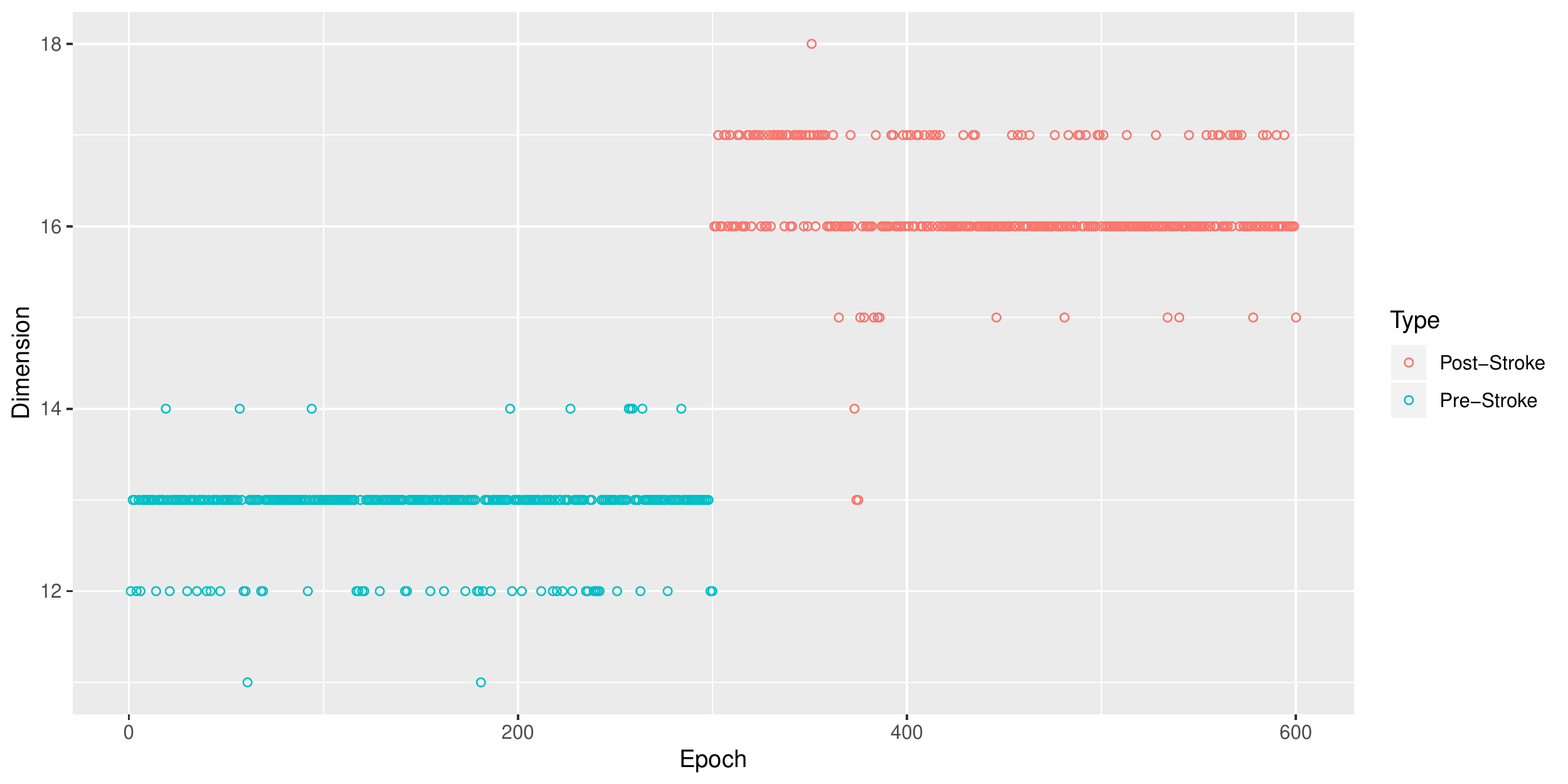}
\caption{ Plot of estimated stationary subspace dimensions $\widehat{d}_i$ for the $i=1,2,\hdots,N=600$ epochs in the stroke experiment.  } \label{fig:lfp_dimension_plot}
\end{figure}

Another related application in neuroscience is functional connectivity wherein the aim is to model dependence between different brain regions at various epochs in an experiment; \citet{cribben_2012}, \citet{cribben_2013}, \citet{cribben_2016}, \citet{cribben_2018}. To mitigate the problem of high-dimensionality arising due to signal from densely voxelated cortical surface, parcellation leads to disjoint regions of interest (ROI) of the brain and signal summaries are obtained in each of these regions. Dependence measures between these ROIs are then computed using their respective signal summaries. In the above pursuit of region-wise comparison of the brain, it is natural to encounter the problem of comparing multivariate processes, say from two different regions, that have unequal dimensions. In \citet{ombao_ting_2016} the problem of modeling effective connectivity in high-dimensional cortical surface signal is pursued wherein a factor analysis is carried out on each ROI and VAR models are used to jointly model the latent factors. Here again, one can potentially end up with unequal number of optimal latent factors 
from different ROIs and thus it will be challenging to make comparison across ROIs. 

Motivated by such applications, we propose a new method to compare  \underline{spectral information} in different multivariate stationary processes of \underline{varying dimensions}. More specifically, the aim is to capture the amount of spectral information in various frequency bands in different stationary processes of unequal  dimensions. 
There are already many methods and models that discuss evolution of spectral information but the key contribution of this paper is in modeling evolution of the spectrum while allowing dimension to also evolve over time. 
We introduce a frequency-specific spectral ratio, which we call the FS-ratio, statistic that accounts for 
the proportion of spectral information in various frequency bands. FS-ratio can be used to (i). identify frequency bands where the pre- and post-stroke epochs are significantly different, (ii). identify frequency bands that accounts for most variation within pre (and post) stroke epochs and (iii). identify the frequency bands that are consistent (vs inconsistent) across all the 600 epochs. One of the key features of this statistic is that it is blind to the 
dimension of the multivariate stationary process and can be used to compare successive epochs with 
possibly different dimensions in the stationary sources. Thus, the proposed FS-ratio is very useful in (a). discriminating between the pre and post stroke signals and (b). tracking changes over the entire course of the experiment while allowing for \underline{varying dimensions}. In Section \ref{s:methodology} we develop our FS-ratio statistic and derive its asymptotic properties. We evaluate the performance of the proposed FS-ratio statistic through some simulation examples in Section \ref{s:simulation}. We return to the LFP dataset in Section \ref{s:application} and discuss the usefulness of the proposed ratio statistic in discriminating between pre- and post-stroke epochs. Section \ref{s:conclusion} concludes.     

The application of our method to the LFP data in Section \ref{s:application} justifiably models the observed 
LFP signals as a multivariate nonstationary time series that is generated by latent sources of interest. 
First, our method clearly demonstrates the evolution of the dimension of these latent sources across the 
600 epochs. Thus, our method provides useful insights on the evolution of the LFP signal. Second, the FS-ratio statistic, having the ability to compare two multivariate processes of unequal dimensions, is estimated and indicates that the beta frequency band information exhibits most variation 
over the course of the stroke experiment.

\section{Methodology}
\label{s:methodology}

In this section we first describe our FS-ratio statistic and the method to analyze the evolution of spectral information in stationary processes with varying dimensions. The asymptotic properties of the proposed statistic along with the required assumptions is discussed in Section \ref{s:theory}.

\subsection{The FS-ratio statistic}
\label{s:spectrum_changes}

Let $Y_{i,t}$, $1 \leq i \leq N$, be a $d_i$-variate  zero-mean second order stationary time series and let $Y_t = (Y_{1,t} , Y_{2,t} , \hdots , Y_{N,T})^{\top}$, $1\leq t \leq T$,  be a $d$-variate zero-mean second-order stationary time series where $d = \sum_{i=1}^N  d_i$. The $d \times d$ spectral matrix for $Y_t$ can be written as 

\begin{equation} \label{e:block-spectrum} 
f(\omega) =  \begin{bmatrix}
f_{11}(\omega) & f_{12}(\omega) & \hdots & f_{1d}(\omega) \\
f_{21}(\omega) & f_{22}(\omega) & \hdots & f_{2d}(\omega) \\
\vdots & &  & \vdots \\
f_{d1}(\omega) & f_{d2}(\omega) & \hdots & f_{dd}(\omega) \\
\end{bmatrix} = \begin{bmatrix}
g_{11}(\omega) & g_{12}(\omega) & \hdots & g_{1N}(\omega) \\
g_{21}(\omega) & g_{22}(\omega) & \hdots & g_{2N}(\omega) \\
\vdots & &  & \vdots \\
g_{N1}(\omega) & g_{N2}(\omega) & \hdots & g_{NN}(\omega) \\
\end{bmatrix} 
\end{equation}
where $\omega \in \lbrack -\pi , \pi \rbrack  $ and $g_{ij}(\omega), \; 1 \leq i,j \leq N $, are $q_{ij} \times q_{ij}$ block matrices matrices with $q_{ij} = \min(d_i,d_j)$. Note that $g_{ij}(\omega) = g_{ji}(\omega)^{*}$, the conjugate transpose. 

The discrete Fourier transform and the periodogram of $Y_t$ are defined in the usual manner,
\begin{equation*}
 J_T(\omega)= \frac{1}{\sqrt{2\pi T}} \sum_{t=1}^{T} Y_t \textrm{exp}(-it\omega),\quad I_{T}(\omega)=J_T (\omega)J_T (\omega)^*,
\end{equation*}
\noindent where $J_T(\omega)^*$ denotes the conjugate transpose. Similar to the representation in \eqref{e:block-spectrum}, the $d \times d$ periodogram matrix $I_{T}(\omega)$ can be veiwed as 
\begin{equation} \label{e:block-periodogram} 
I_T(\omega) =  \begin{bmatrix}
\mathcal{I}_{11}(\omega) & \mathcal{I}_{12}(\omega) & \hdots & \mathcal{I}_{1d}(\omega) \\
\mathcal{I}_{21}(\omega) & \mathcal{I}_{22}(\omega) & \hdots & \mathcal{I}_{2d}(\omega) \\
\vdots & &  & \vdots \\
\mathcal{I}_{d1}(\omega) & \mathcal{I}_{d2}(\omega) & \hdots & \mathcal{I}_{dd}(\omega) \\
\end{bmatrix} = \begin{bmatrix}
I_{11}(\omega) & I_{12}(\omega) & \hdots & I_{1N}(\omega) \\
I_{21}(\omega) & I_{22}(\omega) & \hdots & I_{2N}(\omega) \\
\vdots & &  & \vdots \\
I_{N1}(\omega) & I_{N2}(\omega) & \hdots & I_{NN}(\omega) \\
\end{bmatrix} 
\end{equation}
where $\omega \in \lbrack -\pi , \pi \rbrack  $ and $I_{ij}(\omega), \; 1 \leq i,j \leq N $, are $q_{ij} \times q_{ij}$ block matrices matrices with $q_{ij} = \min(d_i,d_j)$ and $I_{ij}(\omega) = I_{ji}(\omega)^{*}$, the conjugate transpose. 

The estimated $ d \times d $ spectral matrix, for $\omega \in \lbrack -\pi,\pi \rbrack$ is given by
%\vspace{-0.2cm}
\begin{equation} \label{eq:estimated_spectral_matrix}
\hat{f}(\omega) = \frac{1}{T} \sum_{j=-\floor{\frac{T}{2}}+1}^{\floor{\frac{T}{2}}} \;  \; K_h( \omega - \omega_j) \;  I_{T}(\omega_j),
\end{equation}
where  $\omega_j=\frac{2 \pi}{T} j$ and $K_h(\cdot)=\frac{1}{h}K(\frac{\cdot}{h})$ where $K(\cdot)$ is a nonnegative symmetric kernel function and $h$ denotes the bandwidth. Assumptions on the kernel and bandwidth
to ensure uniform consistency in $\omega \in \lbrack -\pi, \pi \rbrack $
 of the estimated spectral matrices are listed in Section \ref{s:theory}. 

The aim of this work is to compare the $d_i \times d_i $ spectral matrices $g_{ii}(\omega)$ across $i=1,2,...N$ epochs over a specific frequency range $(a,b)$ for some $0<a<b<\pi$. The challenge here, however, is that the dimension of the processes $Y_{i,t} \in \mathbb{R}^{d_i}$  varies across the $N$ epochs and hence the spectral matrices across $N$ epochs have varying dimensions. We thus focus on the spread or distribution of spectral information in each of these stationary processes $Y_{i,t}$ across different frequency ranges. More precisely, we define the frequency-specific spectral (FS-ratio) parameter  as   

\begin{equation} \label{e:R-def-pop}
R_{i,a,b} \; = \; \frac{2 \;  r_{i,a,b}}{r_{i,-\pi,\pi}} \;  =  \; \frac{ 2 \int_a^b || vec( g_{ii} (\omega) ) ||_2^2 d \omega}{\int_{-\pi}^{ \pi} ||  vec( g_{ii}(\omega) ) ||_2^2 d \omega} 
\end{equation}
for some frequency band $(a,b) \subset (0, \pi)$,  for $i=1,2,...N$. Observe that $R_{i,a,b} \in (0,1)$ can be viewed as a measure that captures the proportion of spectral information found in the frequency range $(a,b)$. 

The data analogue of the FS-ratio parameter in \eqref{e:R-def-pop} is then given by the FS-ratio statistic:

\begin{equation} \label{e:R-def-samp}
\widehat{R}_{i,a,b} \; = \; \frac{2 \; \hat{r}_{i,a,b}}{\hat{r}_{i,-\pi,\pi}} \; = \;  \frac{ 2 \int_a^b || vec( \widehat{g}_{ii}(\omega) ) ||_2^2 d \omega}{\int_{-\pi}^{ \pi} ||  vec( \widehat{g}_{ii} (\omega) ) ||_2^2 d \omega} 
\end{equation}
 for some $0 < a < b < \pi $ for $i=1,2,...N$. The asymptotic properties of the quantities $\hat{r}_{i,a,b}$ and $\widehat{R}_{i,a,b}$ are discussed in Section \ref{s:theory}. Before proceeding further, we provide a simple illustration of FS-ratio statistic through the following example.

\begin{example}[Example 2.1]
We consider univariate process $Y_{i,t}$ that is given by 
\begin{equation}\label{e:example_2_1_model}
Y_{i,t} = 1_{i<300} \Big( 0.9 Y_{i,t-1}  + \epsilon_{i,t} \Big) \;+ \; 1_{i \geq 300} \Big( 0.25 Y_{i,t-1} -0.75 Y_{i,t-2} + \epsilon_{i,t} \Big)
\end{equation}
where $\epsilon_{i,t}$ is i.i.d $N(0,1)$, $i=1,2,\hdots,N=600$ epochs, $t=1,2,\hdots,T=1000$. The process $Y_{i,t}$ is given by an AR(1) with coefficient 0.9 or by an AR(2) with coefficients (0.25,-0.75). The top panel in Figure \ref{fig:ar_spec_spread} plots the true AR(1) and AR(2) spectrum from \eqref{e:example_2_1_model} respectively. The bottom panel in Figure \ref{fig:ar_spec_spread} plots the FS-ratio statistic $\widehat{R}_{i,a,b}$ for different frequency ranges $(a,b) \subset (0,\pi)$. When $(a,b)=(0,\pi/10)$, $\widehat{R}_{i,a,b}$ is almost 100\% percent for epochs $i=1,2,\hdots,300$ because the AR(1) spectrum with coefficient 0.9 has a lot of low frequency information. Similarly when $(a,b) = (2 \pi /5, 3 \pi / 5 )$, we get $\widehat{R}_{i,a,b}$ to be around 85\% for epochs $i=301,302,\hdots,600$ as the AR(2) with coefficients $(0.25,-0.75)$ has a lot of spectral information in that frequency range. 

Next, we consider univariate process $Y_{i,t}$ that is given by 
\begin{equation}\label{e:example_2_2_model}
Y_{i,t} = 1_{i<300} \Big( -0.9 Y_{i,t-1}  + \epsilon_{i,t} \Big) \;+ \; 1_{i \geq 300} \Big( 0.25 Y_{i,t-1} -0.75 Y_{i,t-2} + \epsilon_{i,t} \Big)
\end{equation}
where $\epsilon_{i,t}$ is i.i.d $N(0,1)$, $i=1,2,\hdots,N=600$ epochs, $t=1,2,\hdots,T=1000$. The only change here is that for $i<300$, the AR(1) coefficient is -0.9 instead of 0.9 in \eqref{e:example_2_1_model}. For the model in \eqref{e:example_2_2_model}, similar to Figure \ref{fig:ar_spec_spread}, we obtain Figure \ref{fig:ar_spec_spread_1}. Here the AR(1) with coefficient -0.9 has a lot of high frequency information and hence when $(a,b)=( 4\pi / 5 , \pi )$ we see that for epochs $i=1,2,\hdots,300$, the FS-ratio statistic is close to 100\%. 

\end{example}

\begin{figure}[H]
\begin{center}
\includegraphics[scale=0.45]{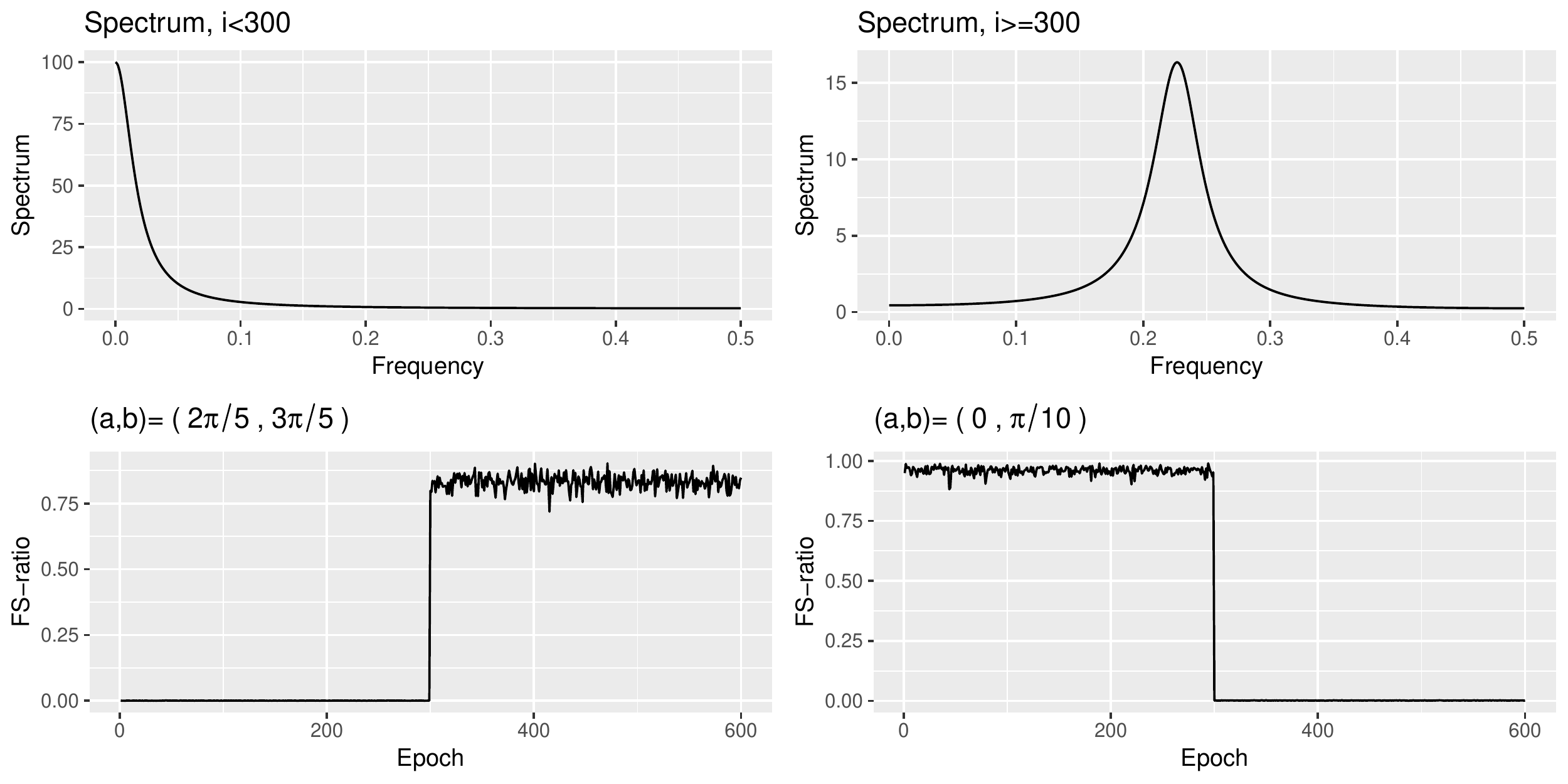}
\caption{  \textbf{Example 2.1} Top panel: Plots of the true AR(1) and AR(2) spectrum from \eqref{e:example_2_1_model} respectively; Bottom panel: Plot of the FS-ratio statistic $\widehat{R}_{i,a,b}$ for $i=1,2,\hdots,N=600$ for specified frequency ranges $(a,b)$.  } \label{fig:ar_spec_spread}
\end{center} 
\end{figure} 

\begin{figure}[H]
\begin{center}
\includegraphics[scale=0.45]{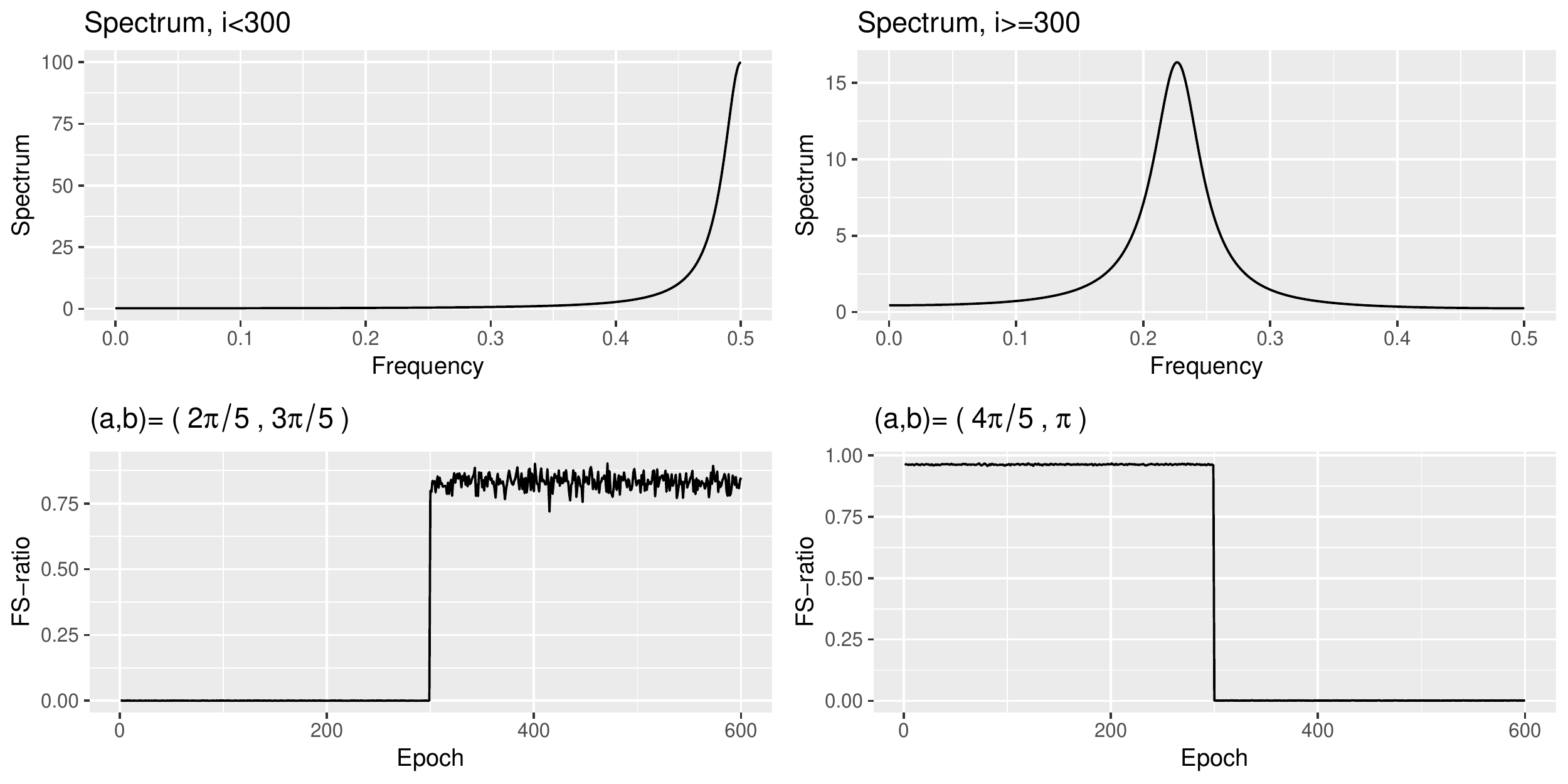}
\caption{ \textbf{Example 2.1} Top panel: Plots of the true AR(1) and AR(2) spectrum from \eqref{e:example_2_2_model} respectively; Bottom panel: Plot of the FS-ratio statistic $\widehat{R}_{i,a,b}$ for $i=1,2,\hdots,N=600$ for specified frequency ranges $(a,b)$.  } \label{fig:ar_spec_spread_1}
\end{center} 
\end{figure} 

\subsubsection{Theoretical properties of the FS-ratio statistic}
\label{s:theory}
In this section we list the required assumptions and discuss the asymptotic properties of the statistics $\hat{r}_{i,a,b}$ and FS-ratio $\widehat{R}_{i,a,b}$. 

\begin{assumption}\label{eq:dgpmix}
Let $\lbrace Y_t \rbrace, \; t \in \mathbb{Z}$ be a $d$-variate zero-mean second-order stationary time series. For any $k>0$, the $k^{th}$ order cumulants of $Y_t$ satisfy
\vspace{-0.2cm}
\begin{equation*}
\sum_{u_1,u_2,...,u_{k-1} \in \mathbb{Z}} \lbrack \; 1+ |u_j|^2 \; \rbrack \;| c_{a_1,a_2,...,a_k}(u_1,u_2,...,u_{k-1})| \; < \infty
\vspace{-0.2cm}
\end{equation*}
for $j=1,2,...,k-1$ and $a_1,a_2,...,a_k = 1,2,...,d$ where $c_{a_1,a_2,...,a_k}(u_1,u_2,...,u_{k-1})$ is the $k^{th}$ order joint cumulant of $Y_{a_1,u_1},...,Y_{a_{k-1},u_{k-1}},Y_{a_k,0}$ as defined in \cite{brillinger81}.
\end{assumption}

\begin{assumption}
(a). The kernel function $K(\cdot)$ is bounded, symmetric, nonnegative and Lipschitz-continuous with compact support $\lbrack -\pi ,  \pi \rbrack$ and
\vspace{-0.2cm}
\begin{equation*}
\int_{-\pi}^{ \pi} K(\omega) \textrm{d}\omega = 1.
\vspace{-0.2cm}
\end{equation*}
where $K(\omega)$ has a continuous Fourier transform  $k(u)$ such that
\vspace{-0.2cm}
\begin{equation*}
\int k^2(u)\textrm{d}u < \infty \;\; \textrm{and} \;\; \int k^4(u)\textrm{d}u < \infty.
\vspace{-0.2cm}
\end{equation*}
(b). The bandwidth $h$ is such that $h^{9/2}T \rightarrow 0$ and $h^2 T \rightarrow \infty$ as $T \rightarrow \infty$.
\end{assumption}

\begin{remark}[Remark 2.2]
\begin{itemize}
\item[(a).] Assumptions 1 and 2 above are the same as in \citet{eichler08} where the first requires existence of all order moments of $Y_t$ and the second ensures consistency of the estimated spectral matrix. It must be noted that the assumptions on the kernel and bandwidth are primarily for establishing asymptotic result in Theorem \ref{thm:f_ij_testing} and can be weakened for Theorems \ref{thm:r_ab_consistency} and \ref{thm:R_consistency}. 
\item[(b).] In the non-Gaussian setting, understanding tail behavior and verifying higher order moments is a non-trivial problem and has been explored in tail-index estimation (\citet{hill1975}, \citet{smooth_hill}) with an assumption on the tail distribution.
\end{itemize}
\end{remark}

\begin{theorem}\label{thm:r_ab_consistency}
Suppose that Assumptions 1,2 are satisfied. Then as $T\rightarrow \infty$, 
\begin{equation}
\widehat{r}_{i,a,b}  \;    \xrightarrow[]{P} \; \int_a^b \; \sum_{r,s=1}^d  \; g_{ii,rs}(\omega) \; \overline{ g_{ii,rs}(\omega)} \; d\omega,
\vspace{-0.2cm}
\end{equation}
where $g_{ii}(\omega)$ is the $d_i \times d_i$ spectral matrix of $Y_{i,t}$ and $\xrightarrow[]{P}$ denotes convergence in probability.
\end{theorem}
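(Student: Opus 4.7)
The plan is to establish uniform-in-$\omega$ consistency of the smoothed periodogram entries and then exploit continuity of the squared modulus together with boundedness of the target spectral densities to pass the limit through the integral over $(a,b)$.

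First, I would invoke the standard uniform consistency result for the kernel-smoothed spectral estimator $\widehat{f}(\omega)$: under Assumption 1 (cumulant summability) and Assumption 2 (kernel and bandwidth conditions),
\[
\sup_{\omega \in [-\pi,\pi]} \max_{1 \leq r,s \leq d} \bigl| \widehat{f}_{rs}(\omega) - f_{rs}(\omega) \bigr| = o_P(1),
\]
which is essentially Theorem 7.4.3 of \citet{brillinger81} and is the version invoked in \citet{eichler08}. Since $\widehat{g}_{ii}(\omega)$ is a sub-block of $\widehat{f}(\omega)$, the same uniform convergence holds entrywise for each $\widehat{g}_{ii,rs}(\omega)$.

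Second, I would rewrite the squared-modulus error through the factorization
\[
\widehat{g}_{ii,rs}(\omega) \overline{\widehat{g}_{ii,rs}(\omega)} - g_{ii,rs}(\omega) \overline{g_{ii,rs}(\omega)} = \bigl( \widehat{g}_{ii,rs}(\omega) - g_{ii,rs}(\omega) \bigr) \overline{\widehat{g}_{ii,rs}(\omega)} + g_{ii,rs}(\omega) \overline{\bigl( \widehat{g}_{ii,rs}(\omega) - g_{ii,rs}(\omega) \bigr)}.
\]
Assumption 1 implies absolutely summable autocovariances and hence bounded, uniformly continuous entries $g_{ii,rs}(\omega)$ on $[-\pi,\pi]$; combined with the uniform consistency from the previous step, this makes $\sup_\omega |\widehat{g}_{ii,rs}(\omega)|$ bounded in probability, so each of the two terms on the right-hand side is $o_P(1)$ uniformly in $\omega$. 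Integrating over the bounded interval $[a,b]$ and summing over the finitely many indices $(r,s)$ preserves convergence in probability, yielding
\[
\bigl| \widehat{r}_{i,a,b} - r_{i,a,b} \bigr| \leq (b-a) \sum_{r,s} \sup_{\omega \in [a,b]} \Bigl| \widehat{g}_{ii,rs}(\omega) \overline{\widehat{g}_{ii,rs}(\omega)} - g_{ii,rs}(\omega) \overline{g_{ii,rs}(\omega)} \Bigr| = o_P(1),
\]
which is the claim.

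The main obstacle is the uniform-in-$\omega$ statement used in the first step. The pointwise version is a routine bias--variance computation (bias of order $h^2$ and variance of order $(Th)^{-1}$, both controlled under Assumption~2(b)), but uniformity over $\omega$ requires a chaining or covering argument combined with cumulant-based moment bounds on periodogram increments---precisely the reason Assumption~1 is phrased in terms of all-order cumulant summability. Since this is the standard spectral-estimation result explicitly invoked in \citet{eichler08}, I would cite it rather than reproduce the argument. A self-contained alternative is to bypass uniform consistency altogether and proceed by Chebyshev's inequality: show $E[\widehat{r}_{i,a,b}] \to r_{i,a,b}$ using Fubini and bias control of $\widehat{g}_{ii,rs}(\omega)$, and show $\operatorname{Var}(\widehat{r}_{i,a,b}) \to 0$ via the fourth-order cumulant expansion of products of smoothed periodograms.
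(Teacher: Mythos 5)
Your proposal is correct, but your primary route is genuinely different from the one the paper takes: the paper's proof is precisely the ``self-contained alternative'' you mention only in your closing sentence. Concretely, the authors apply Chebyshev's inequality to $\widehat{r}_{i,a,b}$ directly. They expand $\widehat{r}_{i,a,b}$ as a double sum over Fourier frequencies of products $I_{ii,rs}(\omega_{j_1})\overline{I_{ii,rs}(\omega_{j_2})}$, show $E[\widehat{r}_{i,a,b}] \to \int_a^b \sum_{r,s} g_{ii,rs}(\omega)\overline{g_{ii,rs}(\omega)}\,d\omega$ using $\int K = 1$, and then bound $\operatorname{Var}(\widehat{r}_{i,a,b})$ by enumerating the coincidence patterns among the four frequencies $\omega_{j_1},\dots,\omega_{j_4}$ in the fourth-moment expansion, obtaining rates $O(1/(T^2h))$, $O(1/T)$ and $O(1/(T^3h^2))$ that all vanish under Assumption~2(b). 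Your main route --- uniform consistency of the smoothed periodogram, the factorization of the squared-modulus difference, and integration over the compact band --- is shorter and valid: Assumption~1 at $k=2$ gives weighted-summable autocovariances and hence bounded spectral entries, and the uniform consistency statement you invoke is the standard kernel spectral-estimation result available under these cumulant and bandwidth conditions (though it is worth double-checking the exact Brillinger theorem number you cite). What the paper's more laborious moment computation buys is reuse: the identical case-by-case covariance machinery is what delivers the joint consistency of $(\widehat{r}_{i,a,b},\widehat{r}_{i,\overline{\Pi}_{(a,b)}})$ in Theorem~\ref{thm:R_consistency} and the mean/variance constants in Theorem~\ref{thm:f_ij_testing}, none of which follow from uniform consistency alone. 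Either approach proves the stated theorem.
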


\begin{proof}
See Appendix for details of the proof.
\end{proof}

Next we take a special case wherein we wish to test for the equality of spectral matrices of same dimensions over an interval $0<a<b<\pi$. Let us assume $d_i=d_j$ for some $i \neq j \in \{1,2,\hdots , N \}$ and define 
\begin{equation}
G_{ij}(\omega) \; = \; \begin{bmatrix}
G_{11}(\omega) & G_{12}(\omega) \\ 
G_{21}(\omega) & G_{22}(\omega)
\end{bmatrix} = \begin{bmatrix}
g_{ii}(\omega) & g_{ij}(\omega) \\
g_{ji}(\omega) & g_{jj}(\omega)
\end{bmatrix}
\end{equation}
where the $d_i \times d_i$ matrices $G_{ab}$, for $a,b=1,2$, are the respective spectral and cross-spectral matrices of the processes $Y_{i,t}$ and $Y_{j,t}$. We consider testing    
\begin{equation}
H_0 \; : \; \ g_{ii} (\omega) = g_{jj}(\omega) \; \forall \; \omega \in (a,b)
\end{equation}  
where $0<a<b<\pi$ and $i,j= 1,2,...,N$ and $i \neq j$ . The test statistic is  

\begin{equation}
\widehat{D}_{i,j} = \int_{a}^{b} || vec( \hat{g}_{ii}(\omega) - \hat{g}_{jj}(\omega) )  ||_2^2 d\omega.
\end{equation}

\begin{theorem} \label{thm:f_ij_testing}
Suppose that Assumptions 1,2 are satisfied. Then as $T\rightarrow \infty$ under $H_0$ we have
\begin{equation}
 2 \pi T \sqrt{h}\; \hat{D}_{i,j}- \frac{\mu_{ij}}{\sqrt{h}}   \xrightarrow[]{D} N(0,\sigma^2_{ij})
\vspace{-0.2cm}
\end{equation}
where
\vspace{-0.2cm}
\begin{equation}\label{e:mu-f-ij}
\mu_{ij} = A_K \int_{-\pi}^{ \pi} 1_{\omega \in  (a,b)  } \Big(\; \sum_{p_1,p_2=1 }^{2} \big(\; -1 \; + \; 2 \delta_{p_1 p_2} \; \big) |tr( G_{p_1 p_2}(\omega) )|^2   \Big) \textrm{d}\omega
\vspace{-0.2cm}
\end{equation}
and
\vspace{-0.2cm}
\begin{equation} \label{e:var-f-ij}
\sigma_{ij}^2 = B_K \int_{-\pi}^{\pi} 1_{\omega \in  (a,b)  } \Big( \; \sum_{p_1,p_2,p_3,p_4=1}^{2} (\; -1 \; + \; 2\delta_{p_1 p_2} \;)\;(\; -1 \; + \; 2\delta_{p_3 p_4}  \;) | tr(\; G_{p_1 p_3}(\omega) \overline{G_{p_2 p_4}(\omega)}^T \; )|^2    \Big) \textrm{d}\omega.
\vspace{-0.2cm}
\end{equation}
where $\xrightarrow[]{D}$ denotes convergence in distribution,
 $A_K =  \int_{-\pi}^{ \pi } K^2(v) dv $, $B_K=4 \int_{a - \pi}^{b+ \pi}\; \Big( \int_{- \pi}^{  \pi} K(u)K(u+v)du \Big)^2\; dv$, $\delta_{rs}=I(r=s)$ is
 the Kronecker delta and tr($\cdot$) denotes the trace of a matrix.
\end{theorem}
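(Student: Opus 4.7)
The plan is to express $\widehat{D}_{i,j}$ as a quartic form in the discrete Fourier transforms of the joint process $(Y_{i,t}^{\top}, Y_{j,t}^{\top})^{\top}$, compute its mean and variance under $H_0$, and then establish asymptotic normality by the method of cumulants. Using the identity $\|vec(M)\|_2^2 = tr(M M^*)$ together with the linear representation $\hat{g}_{ii} - \hat{g}_{jj} = \hat{G}_{11} - \hat{G}_{22}$, I first rewrite
\[
\widehat{D}_{i,j} \;=\; \int_a^b \sum_{p_1,p_2=1}^{2} \bigl(-1+2\delta_{p_1 p_2}\bigr)\, tr\!\bigl(\hat{G}_{p_1 p_1}(\omega)\,\hat{G}_{p_2 p_2}(\omega)^*\bigr)\, d\omega,
\]
which already exposes the sign pattern appearing in both $\mu_{ij}$ and $\sigma_{ij}^2$. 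Each block $\hat{G}_{pp}$ is the kernel-smoothed periodogram block defined in \eqref{eq:estimated_spectral_matrix}.

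To handle $E[\widehat{D}_{i,j}]$ under $H_0$, I note that the squared-bias piece, a combination of $tr(g_{i i}\,g_{j j}^{*})$-type terms carrying the same sign pattern, cancels identically because $g_{i i}(\omega) = g_{j j}(\omega)$ on $(a,b)$. The leading remainder arises from the periodogram-covariance expansion (valid under Assumption 1 via Brillinger's product theorem)
\[
Cov\bigl(I_{p_1 p_1, rs}(\omega_j),\, I_{p_2 p_2, rs}(\omega_j)\bigr) \;=\; G_{p_1 p_2, rr}(\omega_j)\,\overline{G_{p_1 p_2, ss}(\omega_j)} \;+\; O(T^{-1}).
\]
Summing over $r,s$ produces the trace-squared quantity $|tr(G_{p_1 p_2})|^2$, while the Riemann-sum identity $T^{-1}\sum_{j} K_h^2(\omega-\omega_j)\to A_K/(2\pi h)$ supplies the kernel constant $A_K$. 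Combining these yields $E[\widehat{D}_{i,j}] = \mu_{ij}/(2\pi T h) + o((Th)^{-1})$, so $2\pi T\sqrt{h}\,E[\widehat{D}_{i,j}] - \mu_{ij}/\sqrt{h} \to 0$.

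Next I expand $Var(\widehat{D}_{i,j})$ as a quadruple sum over the block indices $(p_1,p_2,p_3,p_4)$; each summand is an iterated integral of a product of four smoothed periodograms, which I reduce via Brillinger's product theorem for cumulants of DFTs. The non-negligible pair partitions produce factors of the form $G_{p_1 p_3}(\omega)\,G_{p_2 p_4}(\omega)^{*}$ evaluated at coincident frequencies, while the genuine fourth-cumulant contributions are of smaller order by Assumption 1. The kernel quadruple sum collapses, after two changes of variable, into $B_K/(T^2 h)$, with $B_K$ generated by the convolution $\int\bigl(\int K(u) K(u+v)\,du\bigr)^2 dv$. Carrying the sign factors $(-1+2\delta_{p_1 p_2})(-1+2\delta_{p_3 p_4})$ through and summing over the internal matrix indices produces exactly $\sigma_{ij}^2/(4\pi^2 T^2 h) + o((T^2 h)^{-1})$, so the scaled variance converges to $\sigma_{ij}^2$.

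Finally, asymptotic normality follows by the cumulant method: every $k$-th cumulant of $2\pi T\sqrt{h}\,\widehat{D}_{i,j}$ for $k \ge 3$ is a sum of integrals over products of $K_h$ indexed by connected partitions of $2k$ DFT indices. The summability in Assumption 1 bounds each such term, while the bandwidth conditions $h^{9/2}T\to 0$ and $h^2 T \to\infty$ force these scaled cumulants to vanish for all $k \ge 3$. The main obstacle will be the combinatorial and kernel-convolution bookkeeping at the fourth-order step, where one must verify that the dominant pair partitions reassemble exactly into the $|tr(G_{p_1 p_3}\,G_{p_2 p_4}^{*})|^2$ structure of $\sigma_{ij}^2$ and that all other partitions are of lower order; this step parallels the analysis in \citet{eichler08} adapted to the block decomposition above.
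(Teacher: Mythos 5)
Your proposal follows essentially the same route as the paper: both reduce $\widehat{D}_{i,j}$ to a quadratic form in the smoothed periodogram blocks, compute the mean via the periodogram covariance expansion (yielding $A_K$ and the $|tr(G_{p_1p_2})|^2$ sign pattern), compute the variance from the non-negligible pair partitions of coincident frequencies (yielding $B_K$ and the $|tr(G_{p_1p_3}\overline{G_{p_2p_4}}^T)|^2$ terms), and obtain normality from the cumulant machinery — which the paper simply imports by citing Theorem 3.5 of Eichler (2008) rather than re-deriving the vanishing of higher cumulants as you sketch. The identification of the exact moment calculations and the role of the bandwidth conditions matches the paper's argument, so no substantive difference.
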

\vspace{-0.2cm}
\begin{proof}
See Appendix for details of the proof.
\end{proof}

\noindent We now turn to the FS-ratio statistic $\widehat{R}_{i,a,b}$ defined in \eqref{e:R-def-samp}. It can be observed that this quantity can be written as 
\begin{gather} \label{e:R-def-samp-rewritten}
\widehat{R}_{i,a,b}= \frac{ 2 \int_a^b || vec( \widehat{g}_{ii}(\omega) ) ||_2^2 d \omega}{\int_{-\pi}^{ \pi} ||  vec( \widehat{g}_{ii}(\omega) ) ||_2^2 d \omega} 
= \frac{2 \int_a^b || vec( \widehat{g}_{ii}(\omega) ) ||_2^2 d \omega}{2 \int_a^b || vec( \widehat{g}_{ii}(\omega) ) ||_2^2 d \omega + \int_{\overline{\Pi}_{(a,b)}} || vec( \widehat{g}_{ii}(\omega) ) ||_2^2 d \omega} \notag \\
= \Big( 1 +   \frac{\int_{\overline{\Pi}_{(a,b)}} || vec( \widehat{g}_{ii}(\omega) ) ||_2^2 d \omega}{2 \int_a^b || vec( \widehat{g}_{ii}(\omega) ) ||_2^2 d \omega}   \Big)^{-1},
\end{gather}
where $\overline{\Pi}_{(a,b)} = \lbrack -\pi, \pi \rbrack \setminus (a,b) \cup (-b,-a)$ for some $0<a<b<\pi$ and $i=1,2, \hdots, N$ . Now we state the result that establishes consistency of $\widehat{R}_{i,a,b}$.

\begin{theorem}\label{thm:R_consistency}
Suppose that Assumptions 1,2 are satisfied and that for a given $0<a<b<\pi$, $r_{i,a,b}>0$ and $r_{i,\overline{\Pi}_{(a,b)}}>0$. Then as $T\rightarrow \infty$, 
\begin{equation}
\widehat{R}_{i,a,b}   \;  \xrightarrow[]{P} \; \Big( 1 + \frac{r_{i,\overline{\Pi}_{(a,b)}}}{2\; r_{i,a,b}} \Big)^{-1}
\vspace{-0.2cm}
\end{equation}
where $r_{i,a,b} = \int_a^b \;||g_{ii}(\omega)||^2 d\omega$  and $r_{i,\overline{\Pi}_{(a,b)}} = \int_{\overline{\Pi}_{(a,b)}} \;||g_{ii}(\omega)||^2 d\omega$.
\end{theorem}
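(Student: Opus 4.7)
The plan is to exploit the rewriting \eqref{e:R-def-samp-rewritten} and reduce the claim to two applications of Theorem \ref{thm:r_ab_consistency} combined with the continuous mapping theorem. Since
\[
\widehat{R}_{i,a,b} = \Big(1 + \frac{\widehat{r}_{i,\overline{\Pi}_{(a,b)}}}{2\,\widehat{r}_{i,a,b}}\Big)^{-1},
\]
where $\widehat{r}_{i,\overline{\Pi}_{(a,b)}} = \int_{\overline{\Pi}_{(a,b)}} ||vec(\widehat{g}_{ii}(\omega))||_2^2 \, d\omega$, it suffices to show that both $\widehat{r}_{i,a,b}$ and $\widehat{r}_{i,\overline{\Pi}_{(a,b)}}$ converge in probability to their population counterparts $r_{i,a,b}$ and $r_{i,\overline{\Pi}_{(a,b)}}$ respectively, and then to invoke continuity of the ratio map at a point where the denominator is bounded away from zero.

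First, I would apply Theorem \ref{thm:r_ab_consistency} directly to conclude $\widehat{r}_{i,a,b} \xrightarrow[]{P} r_{i,a,b}$. Next, I would establish the analogous convergence on the complementary region by writing $\overline{\Pi}_{(a,b)} = [-\pi,-b] \cup [-a,a] \cup [b,\pi]$ as a finite disjoint union and decomposing $\widehat{r}_{i,\overline{\Pi}_{(a,b)}}$ into a sum of three integrals, each of the form appearing in Theorem \ref{thm:r_ab_consistency}. The negative-frequency pieces are handled either by reapplying the same uniform consistency argument used in the proof of Theorem \ref{thm:r_ab_consistency} (which does not really depend on the interval being a subset of $(0,\pi)$), or, more cheaply, by the change of variables $\omega \mapsto -\omega$ combined with the Hermitian symmetry $g_{ii}(-\omega) = \overline{g_{ii}(\omega)}$, which makes the integrand $||vec(g_{ii}(\omega))||_2^2$ even in $\omega$. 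Either route shows that each of the three summands is consistent for its population integral, and summing by the continuous mapping theorem yields $\widehat{r}_{i,\overline{\Pi}_{(a,b)}} \xrightarrow[]{P} r_{i,\overline{\Pi}_{(a,b)}}$.

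Finally, componentwise convergence in probability gives joint convergence, so $(\widehat{r}_{i,a,b}, \widehat{r}_{i,\overline{\Pi}_{(a,b)}}) \xrightarrow[]{P} (r_{i,a,b}, r_{i,\overline{\Pi}_{(a,b)}})$. The map $(x,y) \mapsto (1 + y/(2x))^{-1}$ is continuous on $\{x>0\}$, and the hypothesis $r_{i,a,b} > 0$ places the limit in this region. One more application of the continuous mapping theorem then delivers the claimed limit.

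No step poses a serious technical obstacle: once Theorem \ref{thm:r_ab_consistency} is in hand, the proof is essentially bookkeeping. The only item worth highlighting is the extension of that theorem's consistency statement from a single subinterval of $(0,\pi)$ to a finite union of subintervals of $[-\pi,\pi]$, and this is immediate from linearity of the integral together with the continuous mapping theorem applied to finite sums. The positivity assumptions $r_{i,a,b} > 0$ and $r_{i,\overline{\Pi}_{(a,b)}} > 0$ in the statement are exactly what is needed to keep the ratio map continuous at the limit and to ensure the limit itself is a well-defined element of $(0,1)$.
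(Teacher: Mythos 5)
Your proposal is correct, and it reaches the conclusion by a slightly different and arguably cleaner route than the paper. The paper establishes joint consistency of the pair $\bigl( \widehat{r}_{i,a,b}, \widehat{r}_{i,\overline{\Pi}_{(a,b)}} \bigr)^{\top}$ by explicitly computing the covariance $\mathrm{cov}\bigl(\widehat{r}_{i,a,b},\widehat{r}_{i,\overline{\Pi}_{(a,b)}}\bigr)$ as a difference $C_1 - C_2$ of fourth-order periodogram moments, mirroring the variance calculation in the proof of Theorem \ref{thm:r_ab_consistency}, and showing these terms decay; it then applies the continuous mapping theorem. You instead observe that marginal convergence in probability to constants already implies joint convergence in probability, which makes the covariance calculation unnecessary for a pure consistency statement (it would only be needed for a joint distributional result). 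Your argument is also more careful on a point the paper glosses over: Theorem \ref{thm:r_ab_consistency} as stated covers a single subinterval of $(0,\pi)$, whereas the denominator involves the region $\overline{\Pi}_{(a,b)}$, which includes negative frequencies and a neighborhood of the origin; your decomposition into a finite union of intervals, handled either by rerunning the argument or by the Hermitian symmetry $g_{ii}(-\omega) = \overline{g_{ii}(\omega)}$, supplies the missing justification. The final step --- continuity of $(x,y)\mapsto (1+y/(2x))^{-1}$ on $\{x>0\}$ together with the hypothesis $r_{i,a,b}>0$ --- matches the paper's appeal to the continuous mapping theorem.
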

\vspace{-0.2cm}
\begin{proof}
See Appendix for details of the proof.
\end{proof}  
 
\noindent Note that in finite sample situations explored in Sections \ref{s:simulation}, \ref{s:application}, we utilize the block bootstrap technique of \cite{politis94} for resampling from a stationary process. This is done to obtain sample quantiles of the FS-ratio statistic $\widehat{R}_{i,a,b}$.

\section{Simulation study}
\label{s:simulation}

In this section we illustrate the performance of the FS-ratio statistic in capturing spread of spectral information using simulated examples. We consider three simulation schemes and report the key summaries of the FS-ratio statistic across repetitions of each of the three schemes. In addition, 95\% bootstrap confidence limits for the FS-ratio statistic is computed from $B=500$ bootstrap replications. Here we utilize the block bootstrap procedure of \cite{politis04,patton09}. For an estimate of the spectral matrix defined in \eqref{eq:estimated_spectral_matrix}, the Bartlett-Priestley kernel with bandwidth $h=N^{-0.4}$ and the Daniell kernel, see Example 10.4.1 in \cite{bd91} with $m=\sqrt{N}$ were implemented. Similar results were obtained for the two kernel choices and only the results from the latter are presented. 

\vspace{0.5cm}

\noindent \textbf{\underline{Scheme 1}}: We simulate the $p_i$-variate process $Y_{i,t}  = (Y_{1,i,t},Y_{1,2,t},\hdots,Y_{p_i,i,t})^{'}$ where each $Y_{k,i,t}$ are independently generated univariate stationary AR(2) process given by 

$$ Y_{k,i,t} = \phi_{i,1} Y_{k,i,t-1} + \phi_{i,2} Y_{k,i,t-2} + \epsilon_{k,i,t} $$

\noindent $\phi_{i,1} = 2 \xi_{i} \cos(\theta_i) $, $\phi_{i,2} = -\xi_{i}^2$, $\epsilon_{k,i,t}$ are i.i.d $N(0,1)$ and $k=1,2,\hdots, p_i$,  $i=1,2,\hdots,N=500$, $t=1,2,\hdots,T=1000$. The dimension $p_i$ for $Y_{i,t}$ is randomly chosen from $\{2,3, \hdots,30 \}$. Here $\xi_i \sim U(0.8,0.98)$ and $\theta_i$ is given by 

$$ \theta_i =
\left\{
	\begin{array}{ll}
		\cos(\frac{ 4 \pi}{25})  & \mbox{if } i < \frac{N}{2} \\
		\cos(\frac{4 \pi}{5}) & \mbox{if } i \geq \frac{N}{2}
	\end{array}
\right. $$

\noindent Figure \ref{fig_scheme1_realization} presents a sample illustration of a bivariate realization from Scheme 1. The plot includes two components from one epoch $i<N/2$ and another epoch $i \geq N/2$. Similar illustrations for Schemes 2 and 3 below can be found in Figures \ref{fig_scheme2_realization}, \ref{fig_scheme3_realization} respectively.  

\begin{figure}[H]
\centering
\includegraphics[scale=0.45]{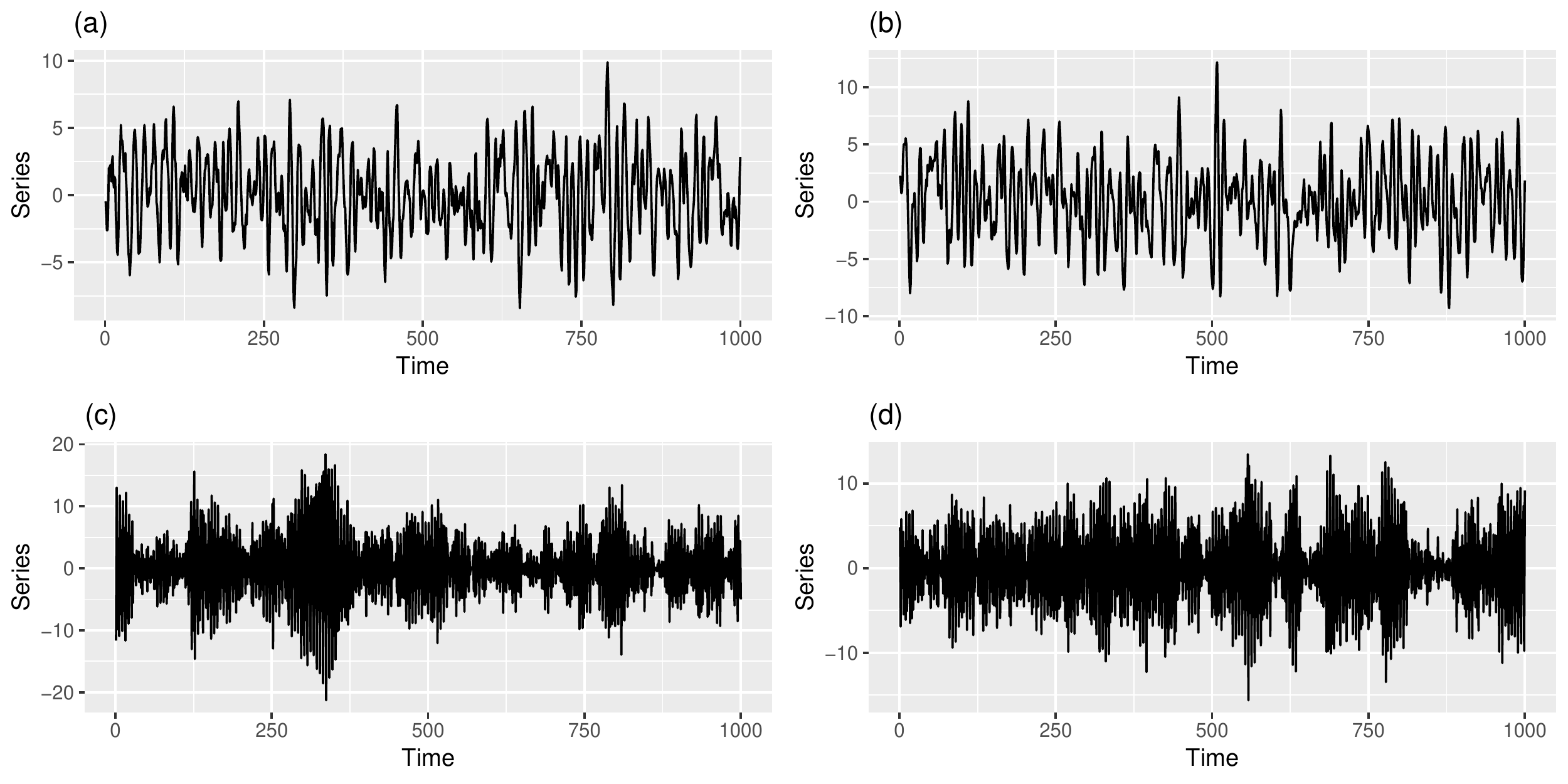}
\caption{Sample bivariate realization from \textbf{Scheme 1}: (a) and (b) are two components of $Y_{i,t} \in \mathbb{R}^2$ when $i<N/2$. (c) and (d) are two components of $Y_{i,t} \in \mathbb{R}^2$ when $i \geq N/2$.} \label{fig_scheme1_realization}
\end{figure}

Tables \ref{tab:sim_scheme_1_first}, \ref{tab:sim_scheme_1_second} contain the numerical summaries of the FS-ratio statistic over 100 replications of Scheme 1. Note that the phase parameter $\theta_i$ for $i<N/2$ in Scheme 1 is at $4 \pi /25$ on a $(0,\pi)$ scale or equivalently at 0.0796 on a $(0,0.5)$ scale. We see from Table \ref{tab:sim_scheme_1_first} that almost all of the spectral information is contained in the first two chosen frequency ranges around this peak. Similarly for $i \geq N/2$, the phase parameter is at $4 \pi /5$ on a $(0,\pi)$ scale or equivalently at 0.3981 on a $(0,0.5)$ scale. Figure \ref{fig:scheme1_histogram} plots a histogram density of the FS-ratio statistic from the 100 replications and similar histogram densities for Schemes 2 and 3 can be found in Figures \ref{fig:scheme2_histogram}, \ref{fig:scheme3_histogram}. From Table \ref{tab:sim_scheme_1_second} we notice that the last two chosen frequency ranges have all of the spectral information. 

\begin{table}[h]
\begin{center}
\begin{tabular}{|c|c|c|c|c|c|}
\hline
Frequency Range  & Mean & Median & SD & Lower & Upper \\
(a,b) &     &    & & CI & CI    \\        
\hline
(0,0.08) & 0.5342 & 0.5391 & 0.0221 & 0.4984 & 0.6253  \\
\hline
(0.08,0.16) & 0.4486 & 0.4544 & 0.0227 & 0.3566 & 0.4831 \\
\hline
(0.16,0.24) & 0.0002 & 0.0002 & 0.0001 & 0.0005 & 0.0019 \\
\hline
(0.24,0.32)  &  0 & 0 & 0 & 0 & 0.0002 \\
\hline
(0.32,0.40)  &  0 & 0 & 0 & 0 & 0 \\
\hline
(0.40,0.48)  &  0 & 0 & 0 & 0 & 0 \\
\hline
\end{tabular}
\end{center}
\vspace{-0.3cm}
\caption{\textbf{Scheme 1, epochs 1-249}: Numerical summaries of FS-ratio statistic $\widehat{R}_{i,a,b}$ for  epochs $i=1,2,\hdots,  249$ for specified frequency ranges $(a,b)$. Here $(a,b) \subset  (0,0.5)$ and $(0,0.5)$ corresponds to the interval $(0,\pi)$.} \label{tab:sim_scheme_1_first}
\end{table}

\begin{table}[h]
\begin{center}
\begin{tabular}{|c|c|c|c|c|c|}
\hline
Frequency Range  & Mean & Median & SD & Lower & Upper \\
(a,b) &     &    & & CI & CI    \\        
\hline
(0,0.08) & 0 & 0 & 0 & 0 & 0  \\
\hline
(0.08,0.16) & 0 & 0 & 0 & 0 & 0 \\
\hline
(0.16,0.24) & 0 & 0 & 0 & 0 & 0 \\
\hline
(0.24,0.32)  & 0.0003 & 0.0002 & 0.0002 & 0.0005 & 0.0017 \\
\hline
(0.32,0.40)  & 0.4561& 0.4595 & 0.0181 & 0.3786 & 0.4903 \\
\hline
(0.40,0.48)  & 0.5205 & 0.5210 & 0.0169 & 0.4759 & 0.5826 \\
\hline
\end{tabular}
\end{center}
\vspace{-0.3cm}
\caption{\textbf{Scheme 1, epochs 250-500}: Numerical summaries of FS-ratio statistic $\widehat{R}_{i,a,b}$ for  epochs $i=250,2,\hdots,  500$ for specified frequency ranges $(a,b)$. Here $(a,b) \subset  (0,0.5)$ and $(0,0.5)$ corresponds to the interval $(0,\pi)$.} \label{tab:sim_scheme_1_second}
\end{table}

\begin{figure}[H]
\centering
\includegraphics[scale=0.45]{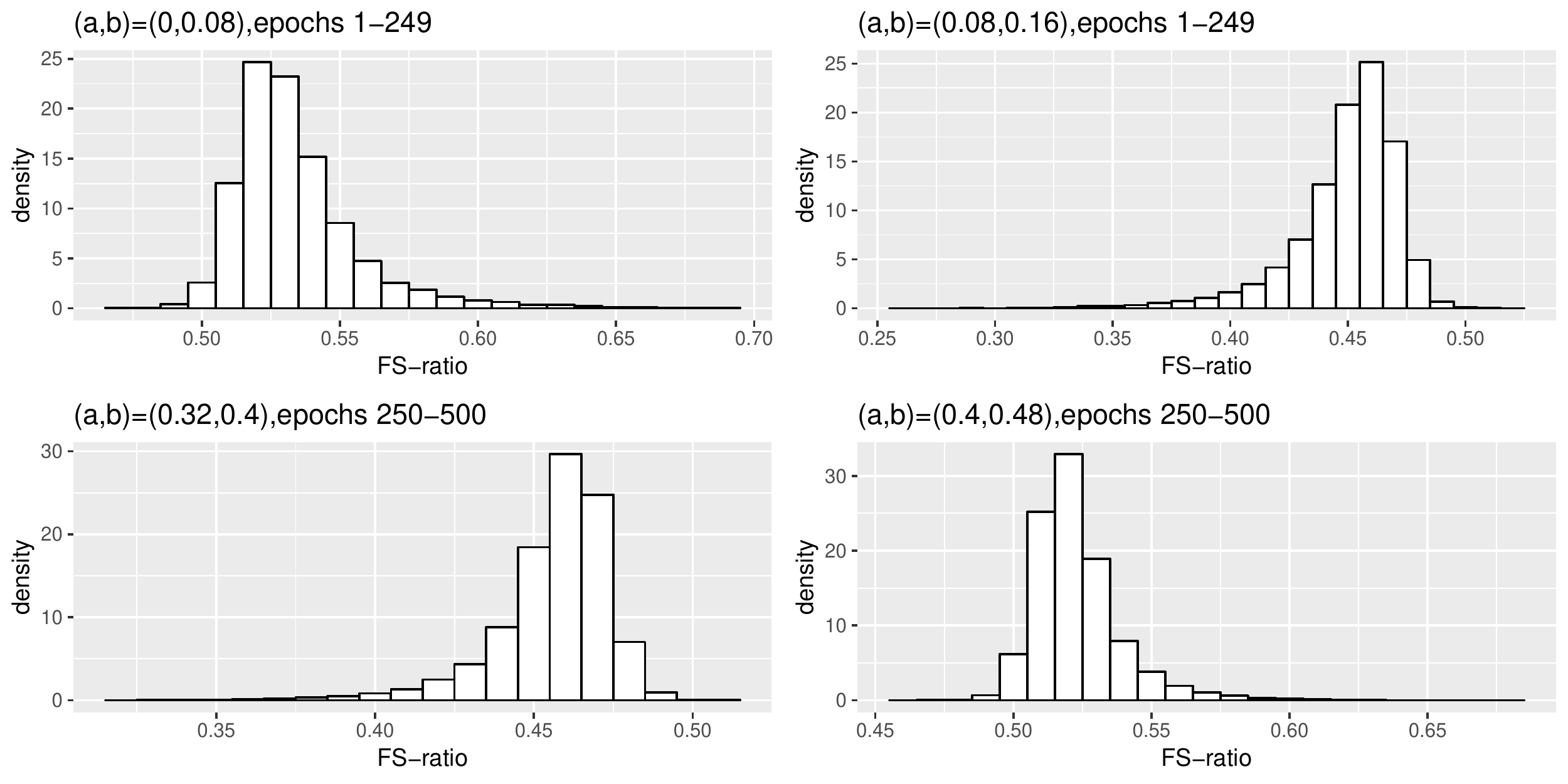}
\caption{\textbf{Scheme 1}: Histogram density of the FS-ratio statistic for different frequency ranges $(a,b) \subset (0,0.5)$.} \label{fig:scheme1_histogram}
\end{figure}

\vspace{0.5cm}

\noindent \textbf{\underline{Scheme 2}}: Similar to Scheme 1, we simulate the $p_i$-variate process $Y_{i,t}  = (Y_{1,i,t},Y_{1,2,t},\hdots,Y_{p_i,i,t})^{'}$ where each $Y_{k,i,t}$ are independently generated univariate stationary AR(2) process given by 

$$ Y_{k,i,t} = \phi_{i,1} Y_{k,i,t-1} + \phi_{i,2} Y_{k,i,t-2} + \epsilon_{k,i,t} $$

\noindent $\phi_{i,1} = 2 \xi_{i} \cos(\theta_i) $, $\phi_{i,2} = -\xi_{i}^2$. The $p_i \times p_i$ variance matrix of the Gaussian noise $\epsilon_{i,t}$ is given by

$$ V(\epsilon_{i,t}) = \begin{bmatrix}
1 & \rho & \rho^2 & \hdots & \rho^{p_i-1} \\
\rho & 1 & \rho  & \hdots & \rho^{p_i-2} \\
\vdots \\
\rho^{p_i-1} & \rho^{p_i-2} & \rho^{p_i-3} & \hdots & 1
\end{bmatrix} $$
$\rho = 0.4$ and $k=1,2,\hdots, p_i$,  $i=1,2,\hdots,N=500$, $t=1,2,\hdots,T=1000$. The dimension $p_i$ for $Y_{i,t}$ is randomly chosen from $\{2,3, \hdots,30 \}$. Here again, $\xi_i \sim U(0.8,0.98)$ and $\theta_i$ is given by 

$$ \theta_i =
\left\{
	\begin{array}{ll}
		\cos(\frac{ 4 \pi}{25})  & \mbox{if } i < \frac{N}{2} \\
		\cos(\frac{4 \pi}{5}) & \mbox{if } i \geq \frac{N}{2}
	\end{array}
\right. $$

\begin{figure}[H]
\centering
\includegraphics[scale=0.45]{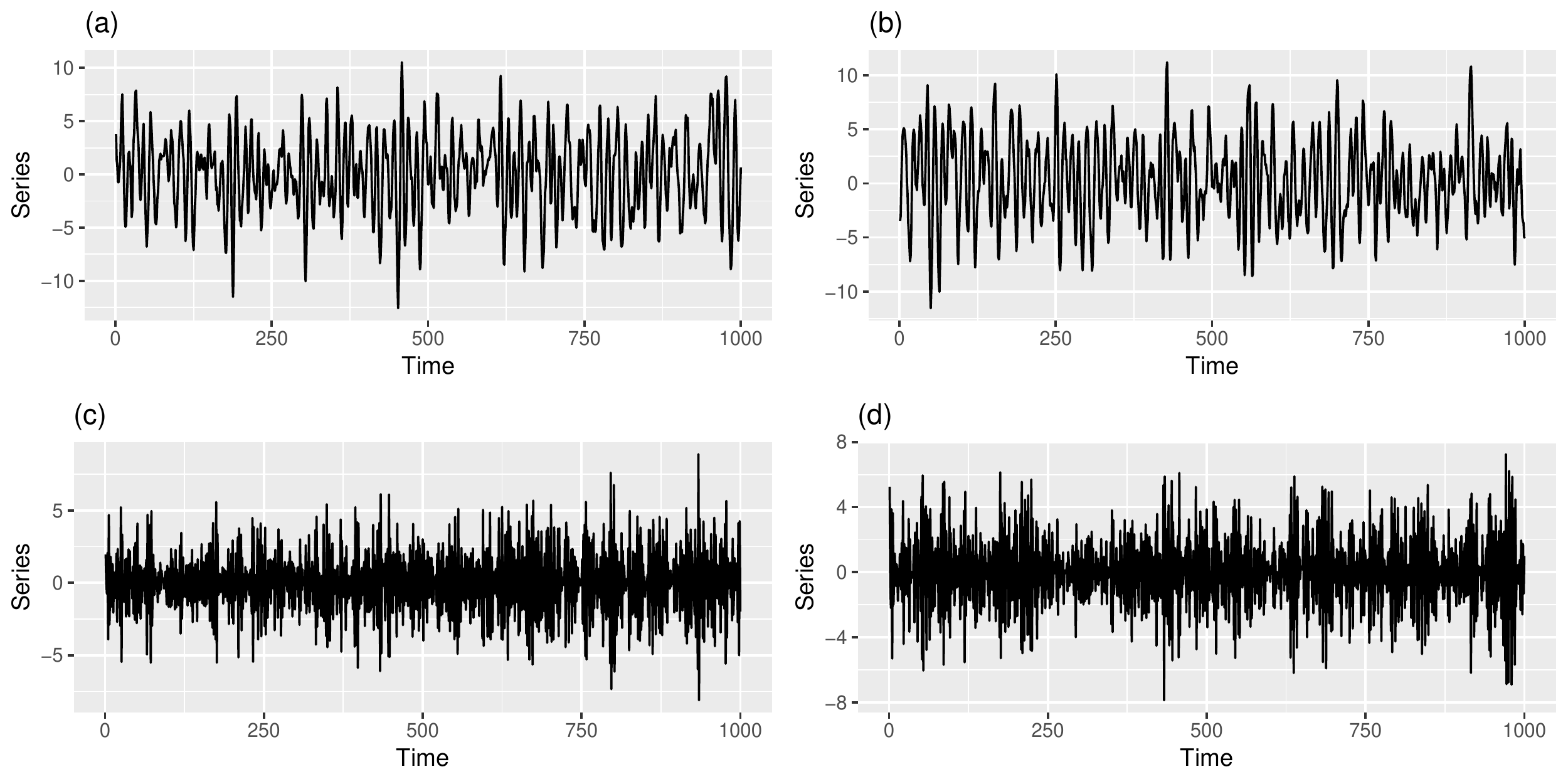}
\caption{Sample bivariate realization from \textbf{Scheme 2}: (a) and (b) are two components of $Y_{i,t} \in \mathbb{R}^2$ when $i<N/2$. (c) and (d) are two components of $Y_{i,t} \in \mathbb{R}^2$ when $i \geq N/2$.} \label{fig_scheme2_realization}
\end{figure}

Tables \ref{tab:sim_scheme_2_first}, \ref{tab:sim_scheme_2_second} contain the numerical summaries of the FS-ratio statistic over 100 replications of the model in Scheme 2. As in Scheme 1, the phase parameter $\theta_i$ for $i<N/2$ is at 0.0796 on a $(0,0.5)$ scale for $i<N/2$ and at 0.3981 on a $(0,0.5)$ scale for $i \geq N/2$. As in Scheme 1, results from Table \ref{tab:sim_scheme_2_first} indicate most of the spectral information are present in the first two chosen frequency ranges. Similarly for $i \geq N/2$, Table \ref{tab:sim_scheme_2_second} shows that the last two chosen frequency ranges have all of the spectral information.

\begin{table}[H]
\begin{center}
\begin{tabular}{|c|c|c|c|c|c|}
\hline
Frequency Range  & Mean & Median & SD & Lower & Upper \\
(a,b) &     &    & & CI & CI    \\        
\hline
(0,0.08) & 0.5371 & 0.5327 & 0.0238 & 0.4977 & 0.6284  \\
\hline
(0.08,0.16) & 0.4459 & 0.4504 & 0.0239 & 0.3549 & 0.4843 \\
\hline
(0.16,0.24) & 0.0002 & 0.0002 & 0.0001 & 0.0005 & 0.0020 \\
\hline
(0.24,0.32)  &  0 & 0 & 0 & 0 & 0.0002 \\
\hline
(0.32,0.40)  &  0 & 0 & 0 & 0 & 0 \\
\hline
(0.40,0.48)  &  0 & 0 & 0 & 0 & 0 \\
\hline
\end{tabular}
\end{center}
\vspace{-0.3cm}
\caption{\textbf{Scheme 2, epochs 1-249}: Numerical summaries of FS-ratio statistic $\widehat{R}_{i,a,b}$ for  epochs $i=1,2,\hdots,  249$ for specified frequency ranges $(a,b)$. Here $(a,b) \subset  (0,0.5)$ and $(0,0.5)$ corresponds to the interval $(0,\pi)$.} \label{tab:sim_scheme_2_first}
\end{table}

\begin{table}[H]
\begin{center}
\begin{tabular}{|c|c|c|c|c|c|}
\hline
Frequency Range  & Mean & Median & SD & Lower & Upper \\
(a,b) &     &    & & CI & CI    \\        
\hline
(0,0.08) & 0 & 0 & 0 & 0 & 0  \\
\hline
(0.08,0.16) & 0 & 0 & 0 & 0 & 0 \\
\hline
(0.16,0.24) & 0 & 0 & 0 & 0 & 0.0001 \\
\hline
(0.24,0.32)  & 0.0003 & 0.0003 & 0.0002 & 0.0005 & 0.0018 \\
\hline
(0.32,0.40)  & 0.4531 &  0.4566 & 0.0196 & 0.3758 & 0.4907 \\
\hline
(0.40,0.48)  & 0.5252 & 0.5225 & 0.0172 & 0.4810 & 0.5948 \\
\hline
\end{tabular}
\end{center}
\vspace{-0.3cm}
\caption{\textbf{Scheme 2, epochs 250-500}: Numerical summaries of FS-ratio statistic $\widehat{R}_{i,a,b}$ for  epochs $i=250,2,\hdots,  500$ for specified frequency ranges $(a,b)$. Here $(a,b) \subset  (0,0.5)$ and $(0,0.5)$ corresponds to the interval $(0,\pi)$.} \label{tab:sim_scheme_2_second}
\end{table}

\begin{figure}[H]
\centering
\includegraphics[scale=0.45]{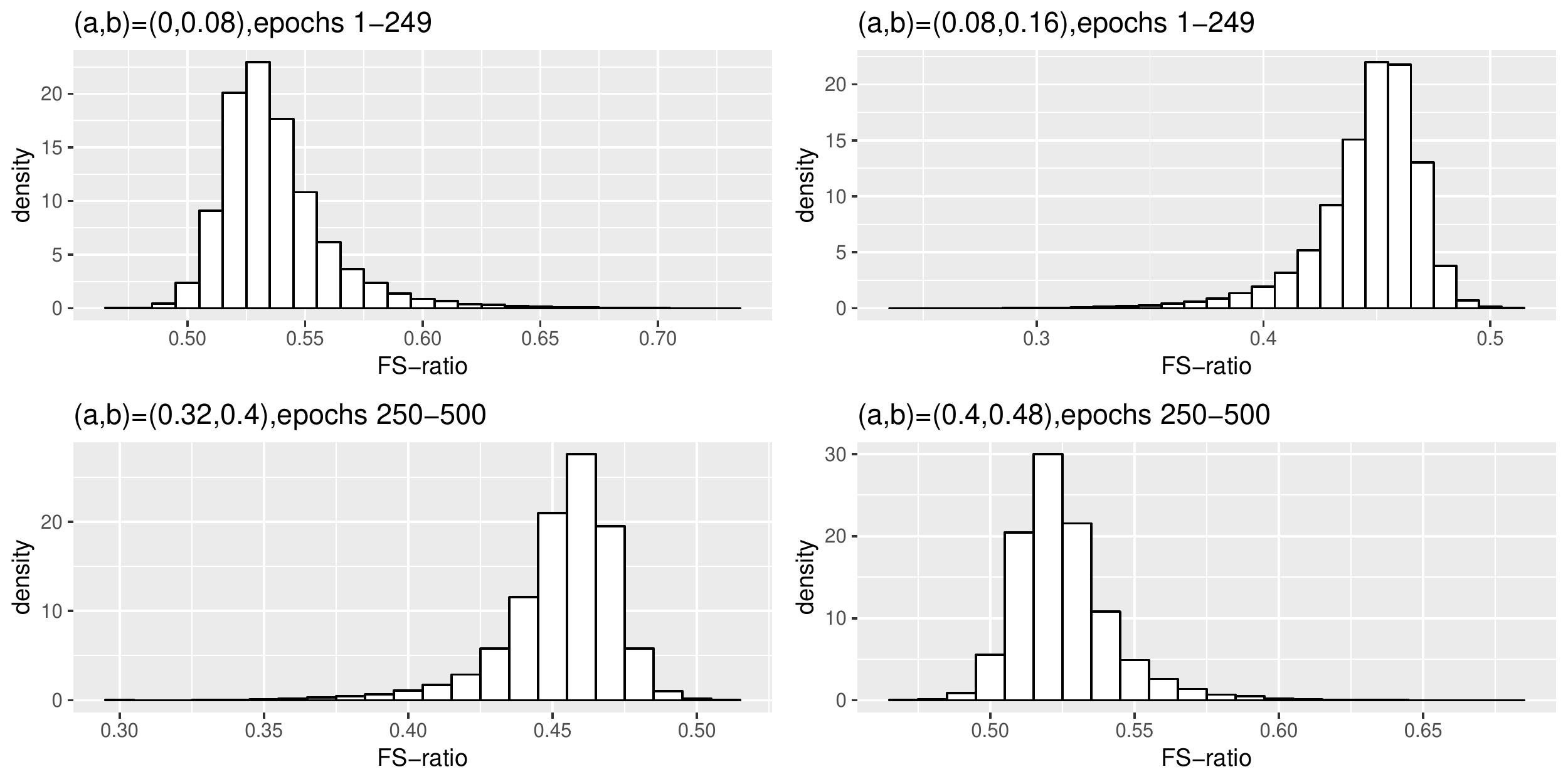}
\caption{\textbf{Scheme 2}: Histogram density  of the FS-ratio statistic for different frequency ranges $(a,b) \subset (0,0.5)$.} \label{fig:scheme2_histogram}
\end{figure}

\vspace{0.5cm}

\noindent \textbf{\underline{Scheme 3}}: Here we let $p=30$ and follow Scheme 1 in generating a $p$-variate process $Y_{i,t}  = (Y_{1,i,t},Y_{1,2,t},\hdots,Y_{p,i,t})^{'}$ for $i=1,2,\hdots,N=500$ and $t=1,2,\hdots=1000$. Then we obtain $X_{i,t} = A_iY_{i,t}$ where $A_i = 1_{(i<\frac{N}{2})} \; I_{p_i}A_1  + 1_{(i \geq \frac{N}{2})} \; I_{p_i}A_2$ and $A_1$ and $A_2$ are two $p \times p$ randomly generated orthogonal matrices and $I_{p_i}$ is the $p_i \times p_i$ identity matrix. We consider $X_{i,t} \in \mathbb{R}^{p_i}$ and study the spread of spectral properties across the $N=500$ epochs. 

\begin{figure}[H]
\centering
\includegraphics[scale=0.45]{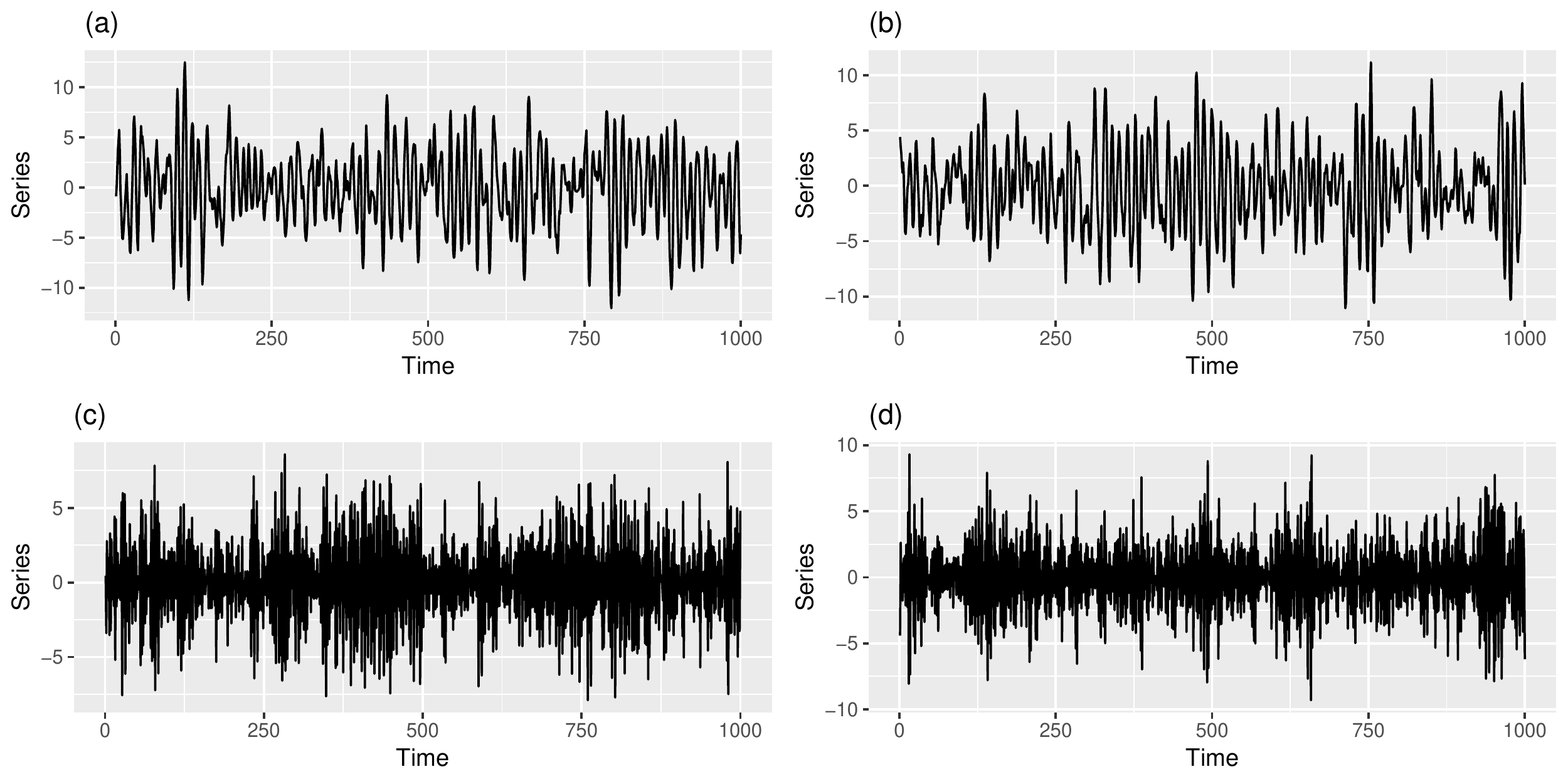}
\caption{Sample bivariate realization from \textbf{Scheme 3}: (a) and (b) are two components of $Y_{i,t} \in \mathbb{R}^2$ when $i<N/2$. (c) and (d) are two components of $Y_{i,t} \in \mathbb{R}^2$ when $i \geq N/2$.} \label{fig_scheme3_realization}
\end{figure}

Tables \ref{tab:sim_scheme_3_first}, \ref{tab:sim_scheme_3_second} contain the numerical summaries of the FS-ratio statistic over 100 replications of the model in Scheme 3. Here we look at $X_{i,t} = A_iY_{i,t}$ which is a mixture of the components of $Y_{i,t}$ generated as in Scheme 1. Note that the peak of the spectral densities of the components of $X_{i,t}$ is still at the phase parameter $\theta_i$ defined in Scheme 1. Hence, the results from Table \ref{tab:sim_scheme_3_first}, \ref{tab:sim_scheme_3_second} are similar to the results from Scheme 1. 

\begin{table}[H]
\begin{center}
\begin{tabular}{|c|c|c|c|c|c|}
\hline
Frequency Range  & Mean & Median & SD & Lower & Upper \\
(a,b) &     &    & & CI & CI    \\        
\hline
(0,0.08) & 0.5342 & 0.5321 & 0.0155 & 0.4871 & 0.5955  \\
\hline
(0.08,0.16) & 0.4489 & 0.4510 & 0.0159 & 0.3872 & 0.4949 \\
\hline
(0.16,0.24) & 0.0002 & 0.0002 & 0.0001 & 0.0003 & 0.0011 \\
\hline
(0.24,0.32)  &  0.0001 & 0.0001 & 0 & 0 & 0.0001 \\
\hline
(0.32,0.40)  &  0 & 0 & 0 & 0 & 0 \\
\hline
(0.40,0.48)  &  0 & 0 & 0 & 0 & 0 \\
\hline
\end{tabular}
\end{center}
\vspace{-0.3cm}
\caption{\textbf{Scheme 3, epochs 1-249}: Numerical summaries of FS-ratio statistic $\widehat{R}_{i,a,b}$ for  epochs $i=1,2,\hdots,  249$ for specified frequency ranges $(a,b)$. Here $(a,b) \subset  (0,0.5)$ and $(0,0.5)$ corresponds to the interval $(0,\pi)$.} \label{tab:sim_scheme_3_first}
\end{table}

\begin{table}[H]
\begin{center}
\begin{tabular}{|c|c|c|c|c|c|}
\hline
Frequency Range  & Mean & Median & SD & Lower & Upper \\
(a,b) &     &    & & CI & CI    \\        
\hline
(0,0.08) & 0 & 0 & 0 & 0 & 0  \\
\hline
(0.08,0.16) & 0 & 0 & 0 & 0 & 0 \\
\hline
(0.16,0.24) & 0 & 0 & 0 & 0 & 0.0001 \\
\hline
(0.24,0.32)  & 0.0003 & 0.0003 & 0.0001 & 0.0003 & 0.0011 \\
\hline
(0.32,0.40)  & 0.4553 &  0.4570 & 0.0130 & 0.4021 & 0.4987 \\
\hline
(0.40,0.48)  & 0.5234  & 0.5219 & 0.0119 & 0.4782  & 0.5738  \\
\hline
\end{tabular}
\end{center}
\vspace{-0.3cm}
\caption{\textbf{Scheme 3, epochs 250-500}: Numerical summaries of FS-ratio statistic $\widehat{R}_{i,a,b}$ for  epochs $i=250,2,\hdots,  500$ for specified frequency ranges $(a,b)$. Here $(a,b) \subset  (0,0.5)$ and $(0,0.5)$ corresponds to the interval $(0,\pi)$.} \label{tab:sim_scheme_3_second}
\end{table}

\begin{figure}[H]
\centering
\includegraphics[scale=0.45]{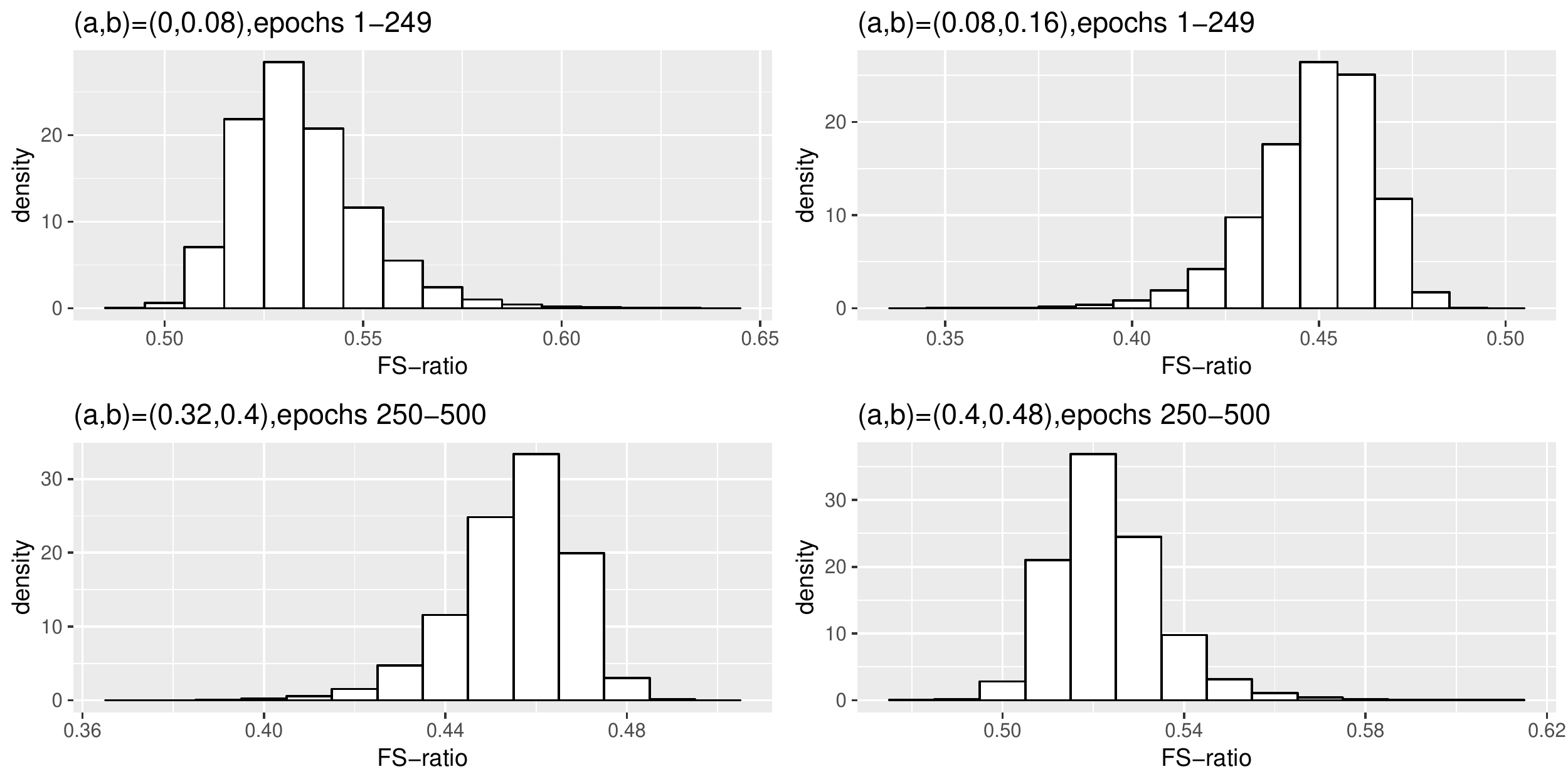}
\caption{\textbf{Scheme 3}: Histogram density of the FS-ratio statistic for different frequency ranges $(a,b) \subset (0,0.5)$.} \label{fig:scheme3_histogram}
\end{figure}

\section{Analysis of complexity of rat local field potentials in a stroke experiment}
\label{s:application}

In this section we investigate the ability of the FS-ratio to identify changes in the spectral properties of the 
local field potential (LFP) of a rat.\footnote{Local field potential data on the experimental rat comes from the stroke experiment conducted at Frostig laboratory at University of California Irvine: \texttt{http://frostiglab.bio.uci.edu/Home.html}. } The aim is to identify changes in complexity 
and structure of the multivariate cortex signal over the course of the experiment. It is also of interest 
to understand the differential roles of frequency bands and determine the specific bands that 
demonstrate the most significant changes that occurred due to the stroke. 

At 32 locations on the rat's cortex, microelectrodes are inserted: 4 layers in the cortex,
at 300$\mu m$, 700$\mu m$, 1100$\mu m$ and 1500$\mu m$ and 8 microelectodes lined
up in each of the 4 layers. We look at the field potential specific to the 32 locations recorded for a total duration of 10 minutes. This signal is divided into 600 epochs with each epoch comprising of 1 second worth of data. The sampling rate here is 1000 Hz resulting in $T=1000$ observations per epoch. Midway through the recording 
period (after epoch 300) a stroke is artificially induced by clamping the medial cerebral artery that supplied blood to the recorded area.  

As a first step in our analysis, we applied a component-wise univariate test of second-order stationarity (\citet{dwivedi_subbarao}) of the LFP signal at each epoch and presented the p-values from the tests in Figure \ref{fig:pvalue_component_plot}. The univariate tests indicate that the signal, within many epochs, is nonstationary. 

Next we model the observed 32-dimensional signal as a multivariate nonstationary time series using the SSA setup. We assume the observed 32 dimensional LFP signal $X_t$ is linearly generated by stationary and nonstationary sources in the cortex. More precisely we have,
\begin{equation}\label{e:ssa_assumption_on_lfp}
X_{i,t} \; = \; A_i Y_{i,t} \; + \; Z_{i,t}, \;\; i=1,2,\hdots,N=600, 
\end{equation} 
where $Y_{i,t} \in \mathbb{R}^{d_i}$ is latent stationary source, $A_i$ is a $p \times d_i$ unknown demixing matrix, $Z_{i,t}$ are the nonstationary sources. 

The next goal in the data analysis is to estimate the epoch-evolving dimension $d_i$ and the latent stationary 
time series $Y_{i,t} \in \mathbb{R}^{d_i}$ where $d_i < p$. This problem of starting with an 
observed nonstationary time series and, after some transformation, getting to a lower dimensional 
stationary time series has interesting applications in neuroscience. For instance, EEG signals measuring 
brain activity appear often as a multivariate nonstationary time series; 
see \citet{slex_ombao}, \citet{srinivasan_2003}, \citet{srinivasan_2006}, \citet{SSAbci}, 
\citet{wu_2016},  \citet{gao_2018}, \citet{euan_2019}  for examples. \citet{eeg_nonstationary_kaplan} regard the nonstationarity as background activity in the brain signal  
and removing this nonstationarity was seen to improve prediction accuracy in  neuroscience 
experiments; \citet{ssa09} and \citet{SSAbci}.  Thus the aim of SSA is to separate the stationary from the nonstationary sources within each 
epoch and focus our attention on the stationary sources. From a stroke neuroscientist's 
perspective, the stationary sources within a short epoch of 1 second are considered as the ``stable" components of the signal since they are consistent within that short interval.  Of course 
the transient components (nonstationary components) may also be of interest in other applications. 
In this section, we will demonstrate that subsequent analyses based on the stationary components 
can be powerful for some types of data.

The evolutionary dimension $d_i$ of the latent stationary sources were presented in Figure \ref{fig:lfp_dimension_plot}. The plot indicates increase in the number of stationary sources in post-stroke epochs (after epoch 300) and this agrees with the results in 
Figure \ref{fig:pvalue_component_plot} wherein more epochs after the stroke witness stationary behavior 
in the individual LFP components. It is indeed interesting that immediately post-occlusion (or immediately 
after stroke onset), the LFPs are highly synchronized: the plots of the observed LFP $X_{i,t}$ and the estimated  squared coherence between the 32 components (Figure \ref{fig:coherence_observed_lfp}) suggest that  different  electrodes look very similar and there is high coherence in between the entire network of electrodes 
at various frequency bands.  This was confirmed by the neuroscientists and recorded in her 
PhD dissertation (\citet{wann_thesis}). Our proposed method produced results that support the previous findings on coherence but it gave
an additional insight about the role of the stationary components that significantly explain this high degree of synchronicity. Next, we investigate further into the lead-lag cross-dependence between microelectrodes. We prewhitened the observed time series to make the lag-0 covariance matrix identity. More precisely, one considers $\Sigma^{-1/2}_i X_t$ where $\Sigma^{-1/2}_i$ is the inverse square root of the lag-0 covariance matrix $V(X_{i,t})$. We observe, in Figure \ref{fig:coherence_observed_lfp}, the significant drop in the magnitude of squared coherence after pre-whitening indicating that the dependence among the 32 components is predominantly due to a contemporaneous (i.e., lag-0) dependence. One can also notice, from the right plot in Figure \ref{fig:coherence_observed_lfp}, a drop in the coherence in the gamma frequency band after the stroke. 

\begin{figure}[H] 
\begin{subfigure}{0.5\textwidth}
\centering
\includegraphics[scale=0.3]{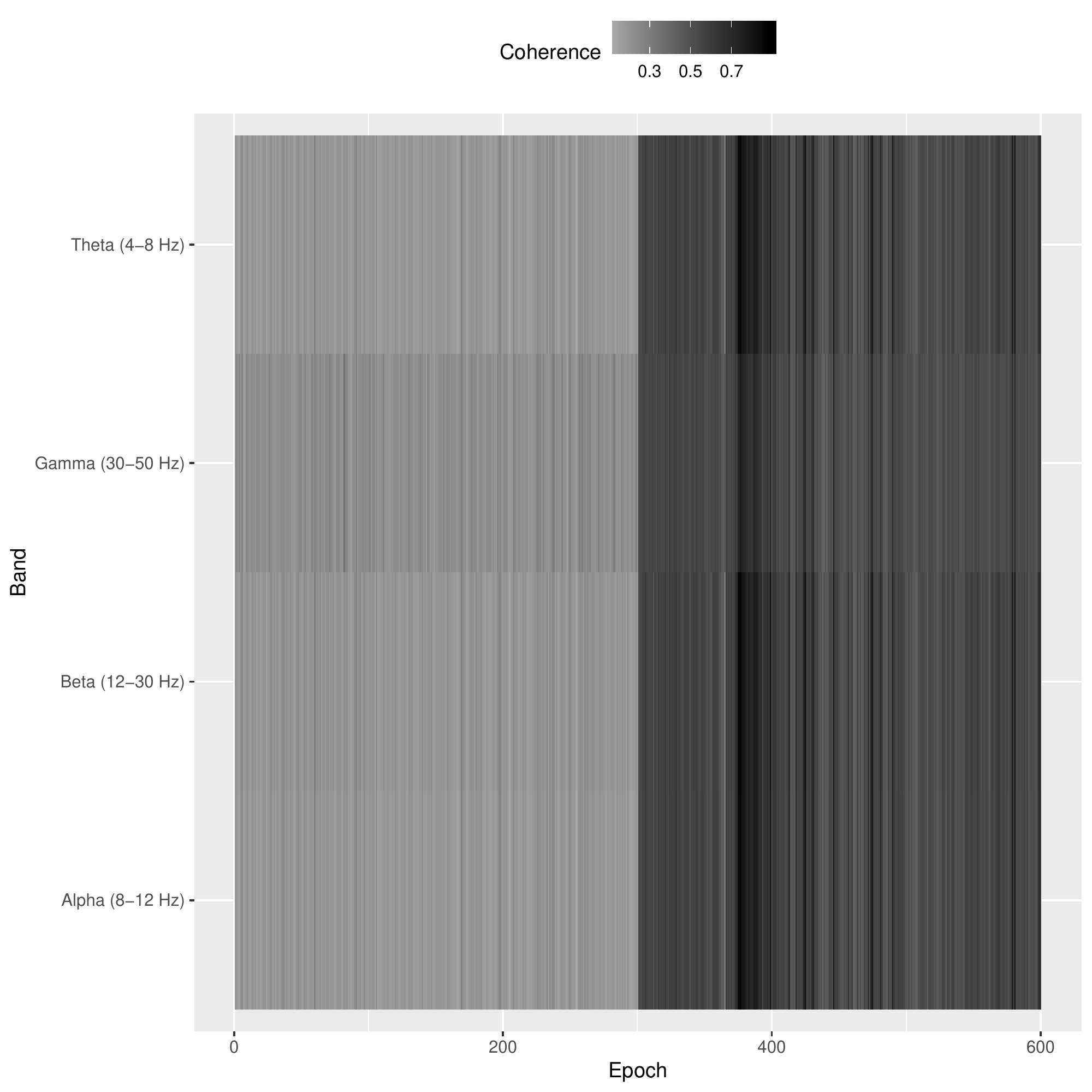}
\end{subfigure}
~
\begin{subfigure}{0.5\textwidth}
\centering
\includegraphics[scale=0.3]{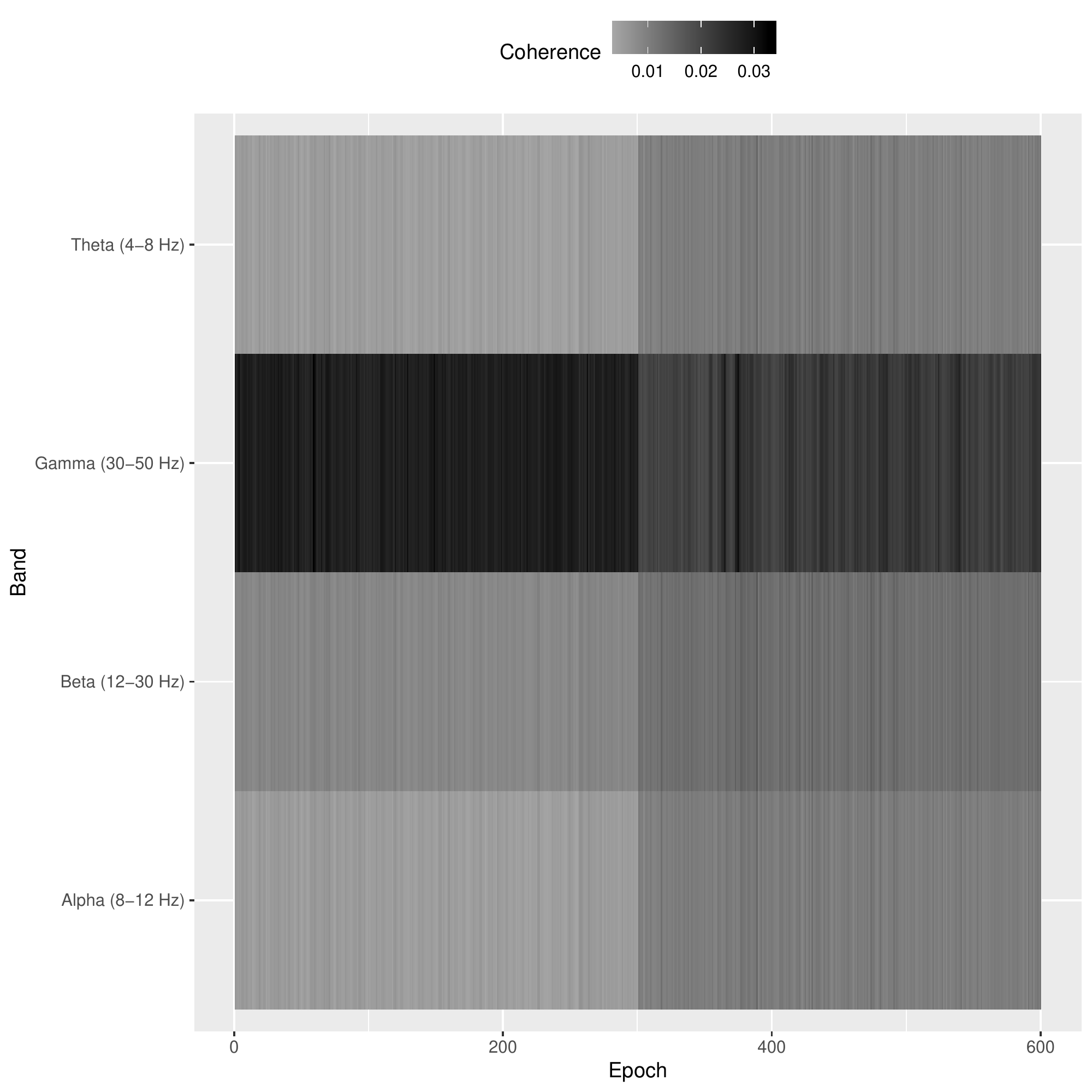}
\end{subfigure}
\caption{ \textbf{Left}: average squared coherency among the 32 components of the observed LFP signal across 600 epochs. The averages are computed across the specified frequency bands. \textbf{Right}: average squared coherency among the 32 components of the pre-whitened LFP signal across 600 epochs.} \label{fig:coherence_observed_lfp}
\end{figure}

We then estimated the latent stationary sources $Y_{i,t}$ for the $i=1,2,\hdots,N=600$ epochs using the DSSA method in \citet{sundararajan:2017}. In order to overcome identifiability issues in the model in \eqref{e:ssa_assumption_on_lfp}, SSA and PCA methods for time series assume an identity lag-0 covariance matrix for $X_{i,t}$ and resort to a prewhitening technique to achieve this. Figure \ref{fig:coherence_stationary} plots the average squared coherence in the non pre-whitened and prewhitened stationary sources across different frequency bands. Similar to the coherence pattern in the observed LFP in Figure \ref{fig:coherence_observed_lfp}, the left plot in Figure \ref{fig:coherence_stationary} witnesses an increase in the coherence after the occurrence of the stroke. Also, the right plot in Figure \ref{fig:coherence_stationary} indicates a substantial drop in the magnitude of coherence in the stationary sources. The prewhitened stationary sources have lower coherence than the coherence of the stationary 
sources based on the non-prewhitened. As noted, previous findings have already indicated an increased coherence 
post stroke onset. Our analysis provided an additional insight that the increase in the coherence post-stroke 
is due only to contemporaneous (or lag-0) dependence. This indicates perfect temporal synchrony in a sense 
that there is no lead-lag cross-dependence between the electrodes. This was suggested by visual inspection 
of the LFP traces and hypothesized by neuroscientists though never formally confirmed until now with 
our analysis.

\begin{figure}[H] 
\begin{subfigure}{0.5\textwidth}
\centering
\includegraphics[scale=0.3]{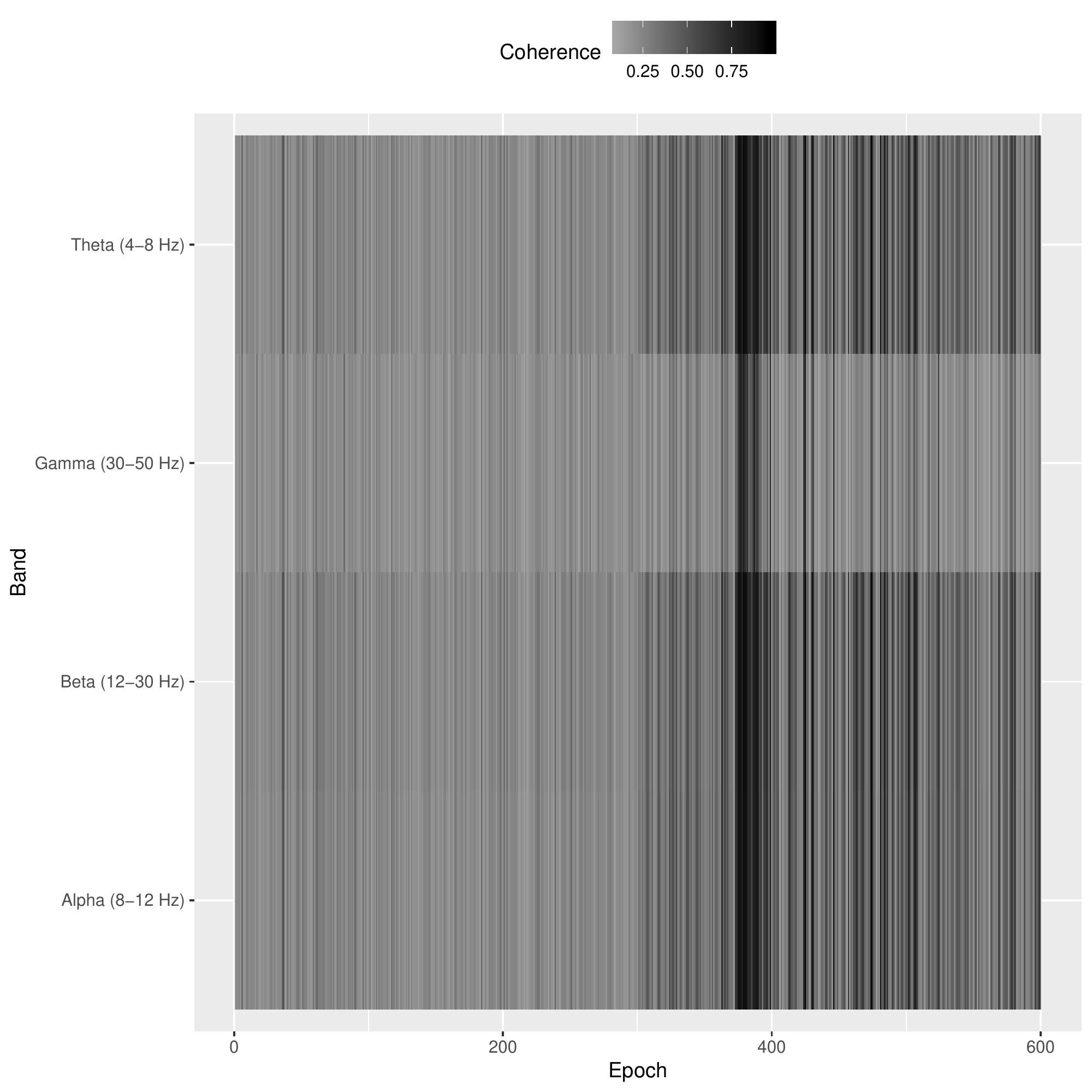}
\end{subfigure}
~
\begin{subfigure}{0.5\textwidth}
\centering
\includegraphics[scale=0.3]{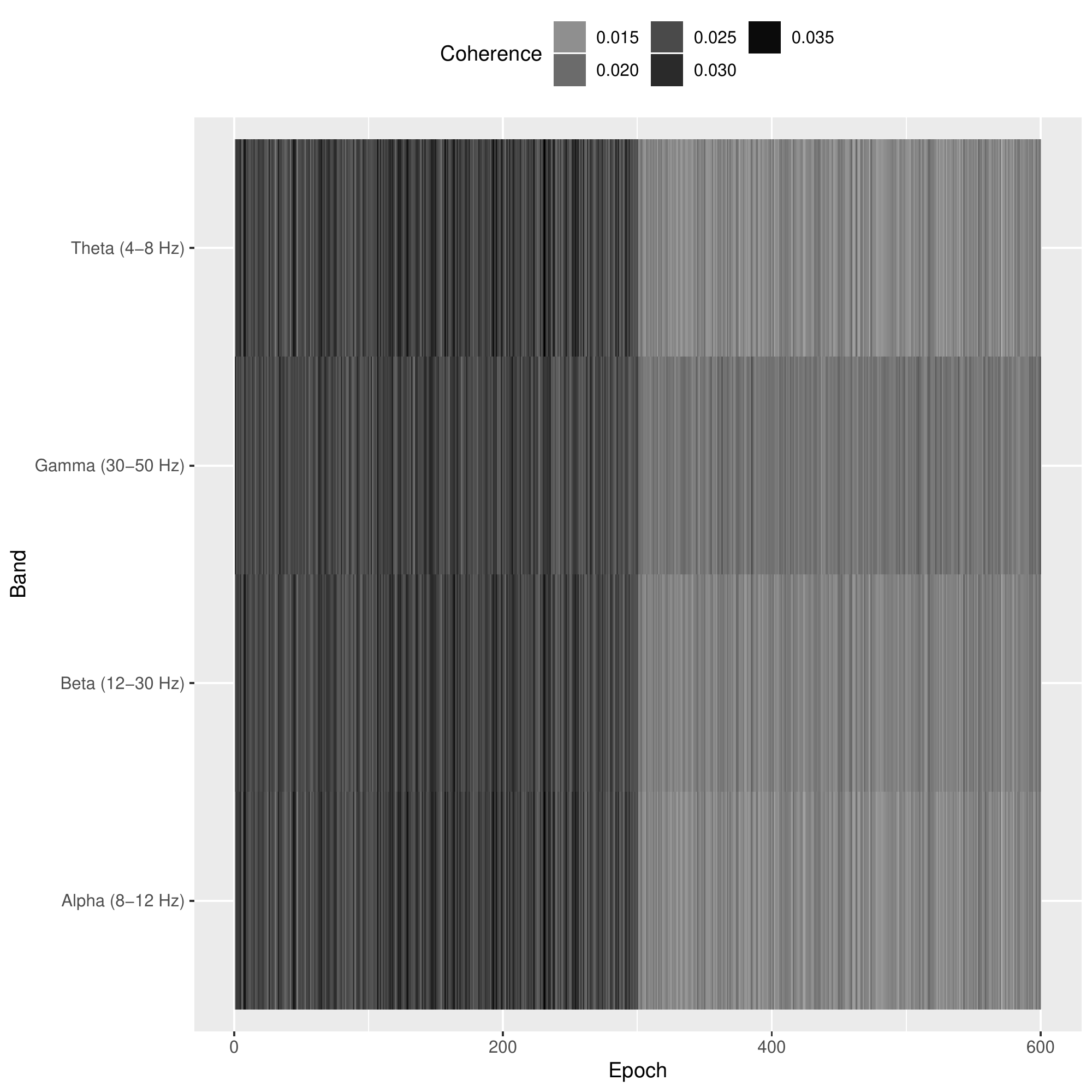}
\end{subfigure}
\caption{ \textbf{Left}: average squared coherency in the estimated stationary sources across 600 epochs. The averages are computed across the specified frequency bands. \textbf{Right}: average squared coherency  in the pre-whitened stationary sources across 600 epochs.} \label{fig:coherence_stationary}
\end{figure}

Next, the FS-ratio statistic was evaluated on these estimated stationary sources at each of the 600 epochs at various frequency bands.  Figure \ref{fig:lfp_ratio_bands} plots the estimated  FS-ratio statistic $\widehat{R}_{i,a,b}$,  $i=1,2,\hdots,N=600$, for the known frequency bands: theta (4-8 Hertz), alpha (8-12 Hertz), beta (12-30 Hertz) 
and gamma (30-50 Hertz). At each epoch $i$, we obtained a $95\%$ confidence interval for the FS-ratio statistic using the  block bootstrap technique of \cite{politis94}. To select the block length, we follow the procedure in \citet{politis04,patton09}. Note that this procedure  is  for the univariate case and hence we apply it to each component of  the multivariate process $Y_{i,t}$ and obtain the block length as the average over all  components. The confidence intervals are the blue shaded region in Figures \ref{fig:lfp_ratio_bands}, \ref{fig:lfp_ratio_bands_specified}. 

The FS-ratio statistic is seen to have differences in the pre- and post-stroke epochs in the Theta, Alpha and Beta bands but not in the Gamma band. It can also be seen that the biggest difference in FS-ratio between pre- and post-stroke is in the Beta band wherein there is a decrease in the amount of spectral information after the stroke. Figure \ref{fig:lfp_ratio_bands_specified} also presents the FS-ratio statistic on other specified frequency bands wherein one notices differences between the pre- and post-stroke epochs.  

Tables 1 and 2 contain numerical summaries of the FS-ratio statistic for the pre- and post-stroke epochs at various frequency bands. We notice that the Beta band is where there is maximum difference observed between the pre- and post-stroke epochs. The Gamma band is consistent throughout the experiment's 600 epochs. Within the pre-stroke epochs (and also within the post-stroke epochs), the most variation in FS-ratio is observed in the Beta band.  

In \citet{fontaine_2019}, a univariate LFP microelectrode-wise change point  analysis was performed on the 
same dataset. In their work, for various frequency bands, changes in the non-linear spectral dependence of the LFP signal is modeled using parametric copulas. They detected change-points 
for a fixed microelectrode and fixed frequency band. One can notice the detection of numerous change points 
in the Delta, Theta, Alpha, Beta and Gamma bands for individual microelectrodes 1, 9 and 17. The detected 
change points include several epochs with very few of them being close to the time of the occlusion (or 
induced stroke) which was epoch $i=300$. 

In contrast, the advantages of our method are as follows: (i). The method treats the observed LFP signal as a multivariate nonstationary time series. Using \eqref{e:ssa_assumption_on_lfp}, we model this observed multivariate signal as a mixture of stationary and nonstationary components. Figure \ref{fig:lfp_dimension_plot} presents the dimension of stationary subspace (dimension of $Y_{i,t}$) across the 600 epochs and this is seen to be a useful feature in understanding changes in the cortical signal after the occurrence of the induced stroke (epoch 300). In other words, an increase in the dimension $d_i$ after the stroke points to a more stationary behavior of the LFP signal after the stroke. (ii).  The FS-ratio statistic, having the ability to compare two multivariate processes with unequal dimensions, is applied on the estimated processes $Y_{i,t}$ for each of the 600 epochs and frequency band specific numerical summaries are presented. The Beta frequency band is seen to be display the 
greatest changes the most within the pre stroke and post stroke epochs and also between the 
pre stroke and post stroke epochs. Also, from Figures \ref{fig:lfp_ratio_bands}, 
\ref{fig:lfp_ratio_bands_specified}, it is very easy to spot a change point at epoch 300 when the 
stroke was induced.

\begin{figure}[h]
\centering
\includegraphics[scale=0.45]{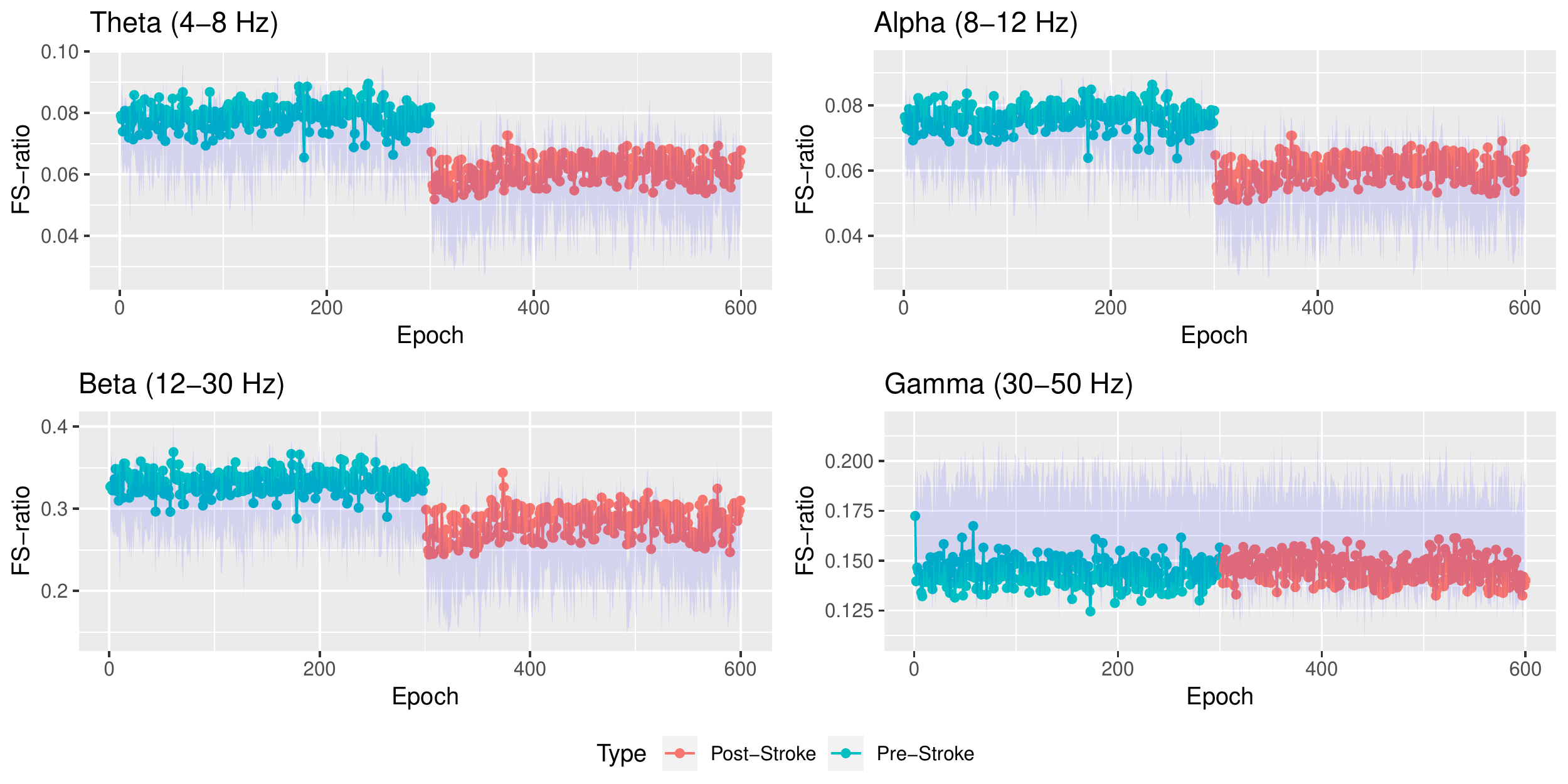}
\caption{ Plot of the FS-ratio statistic $\widehat{R}_{i,a,b}$ for $i=1,2,\hdots,N=600$ for various frequency bands. The blue shaded region corresponds to a 95\% confidence interval. } 
\label{fig:lfp_ratio_bands}
\end{figure}

\begin{figure}[h]
\centering
\includegraphics[scale=0.45]{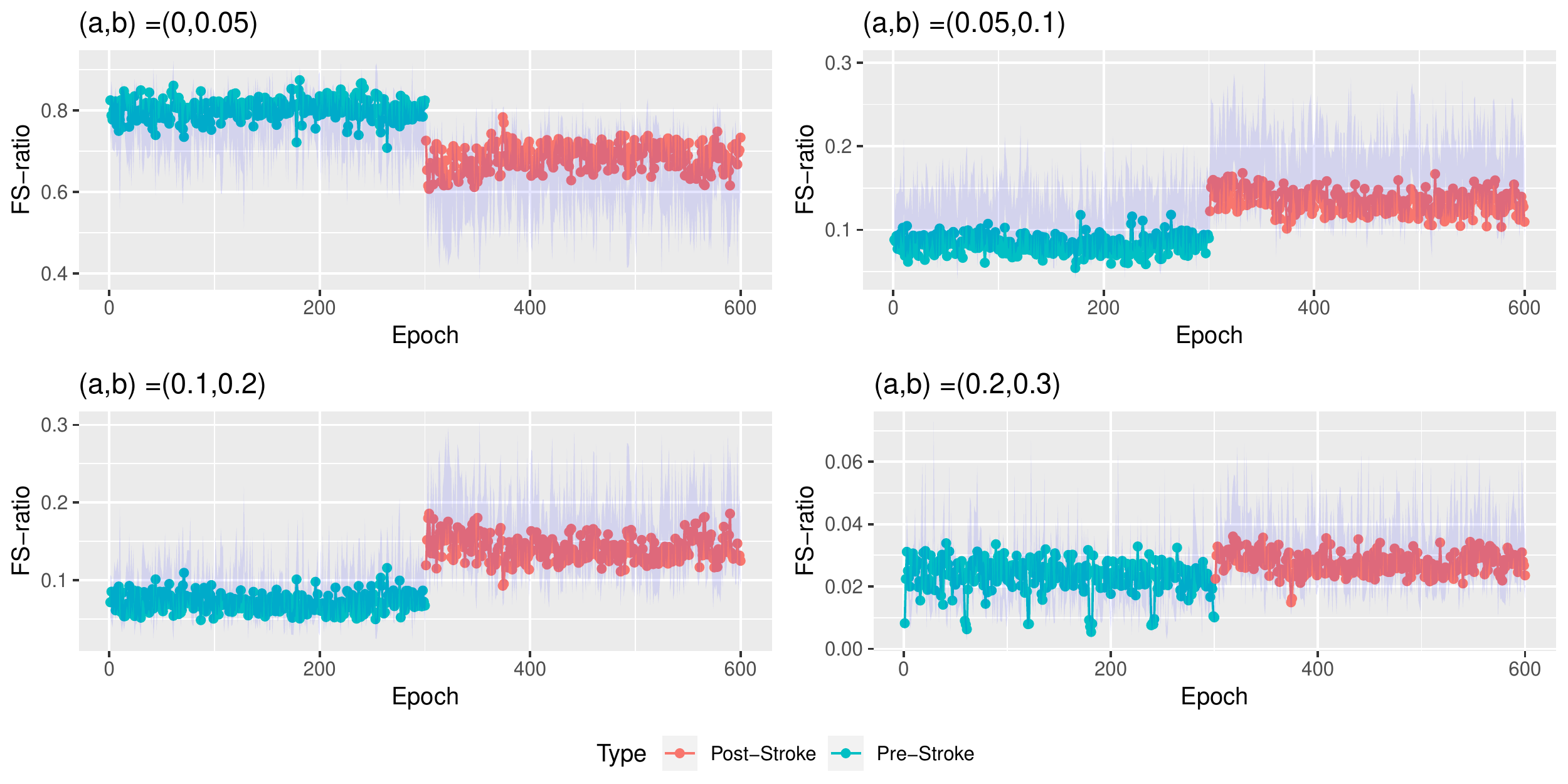}
\caption{ Plot of the FS-ratio statistic $\widehat{R}_{i,a,b}$ for $i=1,2,\hdots,N=600$ for specified frequency ranges $(a,b)$. Here $(a,b) \subset  (0,0.5)$ and $(0,0.5)$ corresponds to the interval $(0,\pi)$. The blue shaded region corresponds to a 95\% confidence interval.   } 
\label{fig:lfp_ratio_bands_specified}
\end{figure}

\begin{table}[H]
\begin{center}
\begin{tabular}{|c|c|c|c|c|c|}
\hline
Frequency Band & Mean & Median & SD & Lower & Upper \\
 &     &    & & CI & CI    \\        
\hline 
Theta (4-8 Hz)& 0.079 & 0.079 &  0.004 &  0.061  & 0.081 \\ 
\hline
Alpha (8-12 Hz) & 0.076 & 0.077 & 0.0035 & 0.059 & 0.078 \\
\hline
Beta (12-30 Hz) & 0.332 & 0.332 & 0.0129 & 0.267 & 0.341 \\
\hline
Gamma (30-50 Hz) & 0.144 & 0.144 & 0.006 & 0.141 & 0.191 \\
\hline
\end{tabular}
\end{center}
\vspace{-0.3cm}
\caption{Numerical summaries of FS-ratio statistic $\widehat{R}_{i,a,b}$ for \textbf{pre-stroke} epochs $i=1,2,\hdots,300$.}
\end{table}

\begin{table}[h]
\begin{center}
\begin{tabular}{|c|c|c|c|c|c|}
\hline
Frequency Band & Mean & Median & SD & Lower & Upper \\
 &     &    & & CI & CI    \\        
\hline 
Theta (4-8 Hz)& 0.062 & 0.062 & 0.004 & 0.0422 & 0.0669 \\ 
\hline
Alpha (8-12 Hz) & 0.060 & 0.061 & 0.004 & 0.041 & 0.064 \\
\hline
Beta (12-30 Hz) & 0.283 & 0.285 & 0.018 & 0.202 & 0.292 \\
\hline
Gamma (30-50 Hz) & 0.146 & 0.146 & 0.006 & 0.135 & 0.187 \\
\hline
\end{tabular}
\end{center}
\vspace{-0.3cm}
\caption{Numerical summaries of FS-ratio statistic $\widehat{R}_{i,a,b}$ for \textbf{post-stroke} epochs $i=301,302,\hdots,600$.}
\end{table}

\section{Concluding remarks} \label{s:conclusion}
In this work we proposed a new frequency-specific spectral ratio statistic FS-ratio that is 
demonstrated to be useful in comparing spectral information in two multivariate stationary processes of \underline{different dimensions}. The method is motivated by applications in neuroscience wherein brain signal is recorded across several epochs and the widely used tactic is to assume the observed signal be linearly generated by latent sources of interest in lower dimensions. Applying PCA/ICA/SSA to the observed signal in \underline{different epochs} in the experiment  results in \underline{different estimates of the dimensions} of latent sources. In these situations, the FS-ratio is seen to be useful because i). It captures the proportion of spectral information in various frequency bands by means of a $L_2$-norm on the spectral matrices and ii). It is blind to the dimension of the stationary process as it only looks at the proportion of spectral information at frequency bands. Under mild assumptions, the asymptotic properties of FS-ratio statistic are derived. In the application of our method to the LFP dataset, we witness the ability of our method in    
(i). identifying frequency bands where the pre- and post-stroke epochs are different, (ii). identifying   frequency bands that accounts for most variation within pre (and post) stroke epochs, (iii). identifying the frequency bands that are consistent across all the 600 epochs of the experiment and (iv). understanding the importance of contemporaneous dependence, both in the observed LFP and the stationary sources, across the 600 epochs and this indicated a perfect synchrony among microelectrodes after the stroke.

The theoretical results in Section \ref{s:theory} establish consistency of the FS-ratio statistic but further investigation is required to establish the large sample distribution. Such a result would help obtain confidence limits and also enable formal testing of the FS-ratio statistic. The result can also be useful in devising a change point method based on the FS-ratio statistic and that could be seen as a parallel to several change point methods in the literature applied on brain signal data; \citet{kirch_ombao}, \citet{schroeder_ombao}. The FS-ratio statistic carries out a multivariate analysis via $L_2$ norms on the spectral matrices. A related problem of interest would be to identify changes in  information in individual LFP microelectrodes. One approach could be modifying FS-ratio statistic by means of a weighted sum of the components of the spectral matrices.

\bibliographystyle{Chicago}
\bibliography{evol_spec}

\section*{Appendix: Proofs} \label{s:proofs}
Here we present the proofs of the theoretical results in Section \ref{s:theory}. 

\begin{proof}[\textbf{Proof of Theorem \ref{thm:r_ab_consistency}}]
Recall that for some $0<a<b<\pi$, 
\begin{align*}
\widehat{r}_{i,a,b}= \int_a^b || vec( \widehat{g}(\omega) ) ||_2^2 d \omega = \int_a^b || \frac{1}{T} \sum_{j=- \floor{T/2} }^{\floor{T/2}} \; K_h(\omega - \omega_j) vec(I_{ii}(\omega_j))||^2 \; d\omega \\
 =  \int_a^b \frac{1}{T^2} \sum_{ j_1,j_2 = - \floor{T/2} }^{\floor{T/2}} \; K_h(\omega - \omega_{j_1} ) K_h( \omega - \omega_{j_2} ) \sum_{r,s=1}^d \; I_{ii,rs}(\omega_{j_1}) \overline{I_{ii,rs}(\omega_{j_2})} \; d\omega.
\end{align*} 
We first consider the expected value of this quantity.
\begin{align*}
E(\widehat{r}_{i,a,b}) = \int_a^b \frac{1}{T^2} \sum_{ j_1,j_2 = - \floor{T/2} }^{\floor{T/2}} \; K_h(\omega - \omega_{j_1} ) K_h( \omega - \omega_{j_2} ) \sum_{r,s=1}^d \; E \Big( I_{ii,rs}(\omega_{j_1}) \overline{I_{ii,rs}(\omega_{j_2})} \Big) \; d\omega \\
 = \int_a^b \frac{1}{T^2} \sum_{ j_1,j_2 = - \floor{T/2} }^{\floor{T/2}} \; K_h(\omega - \omega_{j_1} ) K_h( \omega - \omega_{j_2} ) \sum_{r,s=1}^d \; g_{ii,rs}(\omega_{j_1}) \overline{g_{ii,rs}(\omega_{j_2})} \; d\omega \; + \; o(1).
\end{align*}
It can be seen that as $T \rightarrow \infty$, $h \rightarrow 0$ and $Th \rightarrow \infty $ the above quantity converges to 
\begin{align*}
\int_a^b \; \sum_{r,s=1}^d \Big( \int_{-\pi}^{\pi} K(v) dv \Big)^2  \; g_{ii,rs}(\omega) \; \overline{ g_{ii,rs}(\omega)} \; d\omega \; =  \; \int_a^b \; \sum_{r,s=1}^d  \; g_{ii,rs}(\omega) \; \overline{ g_{ii,rs}(\omega)} \; d\omega.
\end{align*}
Next, for the variance we have $V(\widehat{r}_{i,a,b}) = A_1 - A_2 $, where
\begin{align*}
A_1 = \int_a^b \int_a^b \frac{1}{T^4} \sum_{j_1,j_2,j_3,j_4} K_h(\omega - \omega_{j_1}) K_h(\omega - \omega_{j_2}) K_h(\lambda - \omega_{j_3}) K_h(\lambda - \omega_{j_4}) \; \times \\
 \sum_{r,s,t,u=1}^{d_i} E\Big( I_{ii,rs}(\omega_{j_1}) \overline{I_{ii,rs}(\omega_{j_2})} I_{ii,tu}(\omega_{j_3}) \overline{I_{ii,tu}(\omega_{j_4})} \Big) \; d\omega \; d\lambda \;\; \textrm{and}  \\
A_2 = \int_a^b \int_a^b \frac{1}{T^4} \sum_{j_1,j_2,j_3,j_4} K_h(\omega - \omega_{j_1}) K_h(\omega - \omega_{j_2}) K_h(\lambda - \omega_{j_3}) K_h(\lambda - \omega_{j_4}) \; \times \\
 \sum_{r,s,t,u=1}^{d_i} E\Big( I_{ii,rs}(\omega_{j_1}) \overline{I_{ii,rs}(\omega_{j_2})} \Big) E\Big( I_{ii,tu}(\omega_{j_3}) \overline{I_{ii,tu}(\omega_{j_4})} \Big) \; d\omega \; d\lambda.
\end{align*}
For the difference in the expectations between $A_1$ and $A_2$ we discuss the relevant cases and their convergence to 0. Firstly, it can be seen that for the following three cases the difference in the expectations is asymptotically 0:  a). $\omega_{j_1} \neq \omega_{j_2} \neq\omega_{j_3} \neq \omega_{j_4}$, b).  $\omega_{j_1} = \omega_{j_2} \neq\omega_{j_3} \neq \omega_{j_4}$, c).  $\omega_{j_1} = \omega_{j_2} \neq\omega_{j_3} = \omega_{j_4}$. Next, when $\omega_{j_1} = \omega_{j_3} \neq \omega_{j_2} = \omega_{j_4}$ we have,
\begin{gather*}
\int_a^b \int_a^b \frac{1}{T^4} \sum_{j_1,j_2} K_h(\omega - \omega_{j_1}) K_h(\omega - \omega_{j_2}) K_h(\lambda - \omega_{j_1}) K_h(\lambda - \omega_{j_2}) 
 \sum_{r,s,t,u=1}^{d_i} \Big[ E\Big( I_{ii,rs}(\omega_{j_1}) \overline{I_{ii,rs}(\omega_{j_2})} \times \\  I_{ii,tu}(\omega_{j_1}) \overline{I_{ii,tu}(\omega_{j_2})} \Big) 
  -  E\Big( I_{ii,rs}(\omega_{j_1}) \overline{I_{ii,rs}(\omega_{j_2})} \Big) 
   E\Big( I_{ii,tu} (\omega_{j_1}) \overline{I_{ii,tu}(\omega_{j_2})} \Big) \Big] d\omega \; d\lambda \\
   = \;
\int_a^b \int_a^b \frac{1}{T^4} \sum_{j_1,j_2} K_h(\omega - \omega_{j_1}) K_h(\omega - \omega_{j_2}) K_h(\lambda - \omega_{j_1}) K_h(\lambda - \omega_{j_2}) 
 \sum_{r,s,t,u=1}^{d_i} \Big[ E\Big( I_{ii,rs}(\omega_{j_1})I_{ii,tu}(\omega_{j_1}) \Big)  \times \\  
 E\Big( \overline{I_{ii,rs}(\omega_{j_2})}  \overline{I_{ii,tu}(\omega_{j_2})} \Big) 
  -  E\Big( I_{ii,rs}(\omega_{j_1}) \Big) E\Big(\overline{I_{ii,rs}(\omega_{j_2})} \Big) 
   E\Big( I_{ii,tu} (\omega_{j_1})\Big) E\Big( \overline{I_{ii,tu}(\omega_{j_2})} \Big) \Big] d\omega \; d\lambda + o(1) \\
   = \;
   \frac{1}{T^4} \sum_{j_1,j_2} \Big( \int_a^b  K_h(\omega - \omega_{j_1}) 
   K_h(\omega -  \omega_{j_2}) \; d\omega \Big)^2  \; 
 \sum_{r,s,t,u=1}^{d_i} \Big[ \Big( g_{ii,rt}(\omega_{j_1})g_{ii,su}(\omega_{j_1}) + g_{ii,rs}(\omega_{j_1})g_{ii,tu}(\omega_{j_1})\Big)\times \\
 \Big( \overline{g_{ii,rt}(\omega_{j_2})} \; \overline{g_{ii,su}(\omega_{j_2})} + \overline{g_{ii,rs}(\omega_{j_2})} \; \overline{g_{ii,tu}(\omega_{j_2})} \Big)  -  \Big(g_{ii,rs}(\omega_{j_1})\overline{g_{ii,rs}(\omega_{j_2})}
 g_{ii,tu}(\omega_{j_1})\overline{g_{ii,tu}(\omega_{j_2})} \Big)
  \Big]  \\
  \; + \;  o(1) \; =   \frac{1}{T^4h^2} \sum_{j_1,j_2} \Big( \int_{\frac{a-\omega_{j_1}}{h}}^{\frac{b-\omega_{j_1}}{h}}  K(u) 
   K(u +  \frac{\omega_{j_1}-\omega_{j_2}}{h}) du \Big)^2  \; 
 \sum_{r,s,t,u=1}^{d_i} \Big[ \cdots \Big] \; =  \; O(\frac{1}{T^2 h}). 
\end{gather*}
The case when $\omega_{j_1}=\omega_{j_2}=\omega_{j_3}\neq\omega_{j_4}$ would have the same rate of decay as above. Next, when $\omega_{j_1} = \omega_{j_3} \neq \omega_{j_2} \neq \omega_{j_4}$ we have,
\begin{gather*}
\int_a^b \int_a^b \frac{1}{T^4} \sum_{j_1,j_2,j_4} K_h(\omega - \omega_{j_1}) K_h(\omega - \omega_{j_2}) K_h(\lambda - \omega_{j_1}) K_h(\lambda - \omega_{j_4}) 
 \sum_{r,s,t,u=1}^{d_i} \Big[ E\Big( I_{ii,rs}(\omega_{j_1}) \overline{I_{ii,rs}(\omega_{j_2})} \times \\  I_{ii,tu}(\omega_{j_1}) \overline{I_{ii,tu}(\omega_{j_4})} \Big) 
  -  E\Big( I_{ii,rs}(\omega_{j_1}) \overline{I_{ii,rs}(\omega_{j_2})} \Big) 
   E\Big( I_{ii,tu} (\omega_{j_1}) \overline{I_{ii,tu}(\omega_{j_4})} \Big) \Big] d\omega \; d\lambda \\
   = \;\int_a^b \int_a^b \frac{1}{T^4} \sum_{j_1,j_2,j_4} K_h(\omega - \omega_{j_1}) K_h(\omega - \omega_{j_2}) K_h(\lambda - \omega_{j_1}) K_h(\lambda - \omega_{j_4}) 
 \sum_{r,s,t,u=1}^{d_i} \Big[ E\Big( I_{ii,rs}(\omega_{j_1}) I_{ii,tu}(\omega_{j_1}) \Big) \times  \\ 
 E\Big( \overline{I_{ii,rs}(\omega_{j_2})} \Big)  E\Big( \overline{I_{ii,tu}(\omega_{j_4})} \Big) 
  -  E\Big( I_{ii,rs}(\omega_{j_1}) \Big) E\Big(\overline{I_{ii,rs}(\omega_{j_2})} \Big) 
   E\Big( I_{ii,tu} (\omega_{j_1})\Big) E\Big( \overline{I_{ii,tu}(\omega_{j_4})} \Big) \Big] d\omega \; d\lambda + o(1) \\
   = \;
   \frac{1}{T^4} \sum_{j_1,j_2,j_4} \Big( \int_a^b  K_h(\omega - \omega_{j_1}) 
   K_h(\omega -  \omega_{j_2}) \; d\omega \Big)\Big( \int_a^b  K_h(\lambda - \omega_{j_1}) 
   K_h(\lambda -  \omega_{j_4}) \; d\lambda \Big)  
 \sum_{r,s,t,u=1}^{d_i} \Big[ \Big( g_{ii,rt}(\omega_{j_1}) \times \\
  g_{ii,su}(\omega_{j_1}) + 
 g_{ii,rs}(\omega_{j_1})g_{ii,tu}(\omega_{j_1}) \Big)\times 
 \Big( \overline{g_{ii,rs}(\omega_{j_2})} \; \overline{g_{ii,tu}(\omega_{j_4})} \Big) -  \Big(g_{ii,rs}(\omega_{j_1})\overline{g_{ii,rs}(\omega_{j_2})}
 g_{ii,tu}(\omega_{j_1})\overline{g_{ii,tu}(\omega_{j_4})} \Big) \Big]  \\  \; +  \; o(1)
  =  \frac{1}{T^4h^2} \sum_{j_1,j_2,j_4} \Big( \int_{\frac{a-\omega_{j_1}}{h}}^{\frac{b-\omega_{j_1}}{h}}  K(u) 
   K(u +  \frac{\omega_{j_1}-\omega_{j_2}}{h}) \; du \Big) \Big( \int_{\frac{a-\omega_{j_1}}{h}}^{\frac{b-\omega_{j_1}}{h}}  K(v) 
   K(v +  \frac{\omega_{j_1}-\omega_{j_4}}{h}) \; dv \Big) \times  \\
 \sum_{r,s,t,u=1}^{d_i} \; \Big[ \;  \cdots \;  \Big] \; + o(1) \; = \; O(\frac{1}{T}). 
\end{gather*}
Finally, we look at the case $\omega_{j_1} =\omega_{j_2} =\omega_{j_3}=\omega_{j_4}$. We have 
\begin{gather*}
\int_a^b \int_a^b \frac{1}{T^4} \sum_{j_1} K^2_h(\omega - \omega_{j_1})  K^2_h(\lambda - \omega_{j_1})
 \sum_{r,s,t,u=1}^{d_i} \Big[ E\Big( I_{ii,rs}(\omega_{j_1}) \overline{I_{ii,rs}(\omega_{j_1})} \times \\  I_{ii,tu}(\omega_{j_1}) \overline{I_{ii,tu}(\omega_{j_1})} \Big) 
  -  E\Big( I_{ii,rs}(\omega_{j_1}) \overline{I_{ii,rs}(\omega_{j_1})} \Big) 
   E\Big( I_{ii,tu} (\omega_{j_1}) \overline{I_{ii,tu}(\omega_{j_1})} \Big) \Big] d\omega \; d\lambda \\
 = \frac{1}{T^4}\sum_{j_1} \Big( \int_a^b K^2_h(\omega - \omega_{j_1}) \; d\omega \Big)^2 \sum_{r,s,t,u=1}^{d_i} \Big[ \cdots \Big] = \frac{1}{T^4 h^4}\sum_{j_1} \Big( \int_a^b K^2( \frac{\omega - \omega_{j_1}}{h} ) d\omega \Big)^2 \sum_{r,s,t,u=1}^{d_i} \Big[ \cdots \Big]  \\
  = \frac{1}{T^4h^2}\sum_{j_1} \Big( \int_{\frac{a-\omega_{j_1}}{h}}^{\frac{b-\omega_{j_1}}{h}}  K^2(u) du \Big)^2 \sum_{r,s,t,u=1}^{d_i} \Big[ \cdots \Big] = \; O(\frac{1}{T^3h^2})
\end{gather*}
\end{proof}

%%%%%%%%%%%%%%%%%%%%%%%%%%%%%%%%%%%%%%%%%%%%%%%%%%%%%%%%%%%%%%%%%%%%%%%%%%%%%%%%%%%%%%%%%%%%%%%
%%%%%%%%%%%%%%%%%%%%%%%%%%%%%%%%%%%%%%%%%%%%%%%%%%%%%%%%%%%%%%%%%%%%%%%%%%%%%%%%%%%%%%%%%%%%%%%

\begin{proof}[\textbf{Proof of Theorem \ref{thm:f_ij_testing}}]
Under the assumptions 1,2 stated earlier, asymptotic normality follows by the application of Theorem 3.5 of \citet{eichler08}. The mean and variance computations are similar to Theorem 2.1 in \citet{jentsch_pauly_15}.
\begin{gather*}
E\Big( \widehat{D}_{i,j} \Big) = 
  \frac{1}{T^2} \int_{a}^{b} \sum_{j_1,j_2} K_h(\omega - \omega_{j_1})K_h(\omega - \omega_{j_2}) \sum_{r,s=1}^{d_i} E \Big( (I_{ii,rs}(\omega_{j_1}) - I_{jj,rs}(\omega_{j_1})) \times ( \overline{ I_{ii,rs}(\omega_{j_2}) - I_{jj,rs}(\omega_{j_2}) } )   \Big) d\omega  \\
=  \frac{1}{T^2h^2} \int_{a}^{b} \sum_{j_1} K^2(\frac{\omega - \omega_{j_1}}{h}) \sum_{r,s=1}^{d_i}  \Big(     g_{ii,rr}(\omega_{j_1})\overline{g_{ii,ss}(\omega_{j_1}}) + g_{jj,rr}(\omega_{j_1})\overline{g_{jj,ss}(\omega_{j_1})} - g_{ij,rr}(\omega_{j_1})\overline{g_{ij,ss}(\omega_{j_1})} \\
 - g_{ji,rr}(\omega_{j_1}) \overline{g_{ji,ss}(\omega_{j_1})}  \Big) d\omega \; + \; o(1).
\end{gather*}
\noindent Hence $E(2 \pi T \sqrt{h} \; \widehat{D}_{i,j})$ is asymptotically equivalent to  
\begin{equation*}
\frac{1}{\sqrt{h}}\Big( \int_{-\pi}^{\pi}K^2(u) du \Big) \int_{-\pi}^{ \pi} 1_{\omega \in  (a,b)  } \Big(\; \sum_{p_1,p_2=1 }^{2} \big(\; -1 \; + \; 2 \delta_{p_1p_2} \; \big) |tr( G_{p_1p_2}(\omega) )|^2   \Big) \textrm{d}\omega     
\end{equation*}
For the variance, we have $V(\widehat{D}_{i,j}) \; = \;  B_1 - B_2$ where,
\begin{gather*}
B_1 = \int_a^b \int_a^b \frac{1}{T^4} \sum_{j_1,j_2,j_3,j_4} K_h(\omega - \omega_{j_1}) K_h(\omega - \omega_{j_2}) K_h(\lambda - \omega_{j_3}) K_h(\lambda - \omega_{j_4}) 
 \sum_{r,s,t,u=1}^{d_i} \\
  E\Big[ \Big( I_{ii,rs}(\omega_{j_1})
  -  
 I_{jj,rs}(\omega_{j_1}) \Big) \times 
 \Big( \overline{ I_{ii,rs}(\omega_{j_2}) - I_{jj,rs}(\omega_{j_2}) } \Big) \times 
  \Big( \overline{I_{ii,tu}(\omega_{j_3}) - I_{jj,tu}(\omega_{j_3})} \Big) \times \\
   \Big(  I_{ii,tu}(\omega_{j_4}) - I_{jj,tu}(\omega_{j_4})  \Big) \Big] \; d\omega \; d\lambda \;\; \textrm{and}  \\
B_2 = \int_a^b \int_a^b \frac{1}{T^4} \sum_{j_1,j_2,j_3,j_4} K_h(\omega - \omega_{j_1}) K_h(\omega - \omega_{j_2}) K_h(\lambda - \omega_{j_3}) K_h(\lambda - \omega_{j_4}) \; \times \\
 \sum_{r,s,t,u=1}^{d_i} E\Big[ \Big( I_{ii,rs}(\omega_{j_1})
  -  
 I_{jj,rs}(\omega_{j_1}) \Big) \times 
 \Big( \overline{ I_{ii,rs}(\omega_{j_2}) - I_{jj,rs}(\omega_{j_2}) } \Big) \Big] \times  \\
 E\Big[ \Big( \overline{ I_{ii,tu}(\omega_{j_3})
  -  
 I_{jj,tu}(\omega_{j_3})} \Big) \times 
 \Big(  I_{ii,tu}(\omega_{j_4}) - I_{jj,tu}(\omega_{j_4})  \Big) \Big]
   \; d\omega \; d\lambda.
\end{gather*}
\noindent The difference in the expectations is asymptotically non-zero when $\omega_{j_1}=\omega_{j_3}\neq \omega_{j_2}=\omega_{j_4}$, $\omega_{j_1}=-\omega_{j_3}\neq \omega_{j_2}=-\omega_{j_4}$, $\omega_{j_1}=\omega_{j_4}\neq \omega_{j_2}=\omega_{j_3}$, $\omega_{j_1}=-\omega_{j_4}\neq \omega_{j_2}=-\omega_{j_3}$. Considering only the first case with a factor of 4, the variance asymptotically yields
\begin{gather*}
V(\widehat{D}_{i,j}) =  \frac{4}{T^2h(2\pi)^2} 
  \int_{a - \pi}^{b+ \pi}\; \Big( \int_{- \pi}^{  \pi} K(u)K(u+v)du \Big)^2\; dv \int_{-\pi}^{\pi} 1_{\omega \in  (a,b)  } \Big( \; \sum_{p_1,p_2,p_3,p_4=1}^{2} (\; -1 \; + \; 2\delta_{p_1 p_2} \;) \\
  \;(\; -1 \; + \; 2\delta_{p_3 p_4}  \;) | tr(\; G_{p_1 p_3}(\omega) \overline{G_{p_2 p_4}(\omega)}^T \; )|^2    \Big) \textrm{d}\omega.
\end{gather*}

\end{proof}

%%%%%%%%%%%%%%%%%%%%%%%%%%%%%%%%%%%%%%%%%%%%%%%%%%%%%%%%%%%%%%%%%%%%%%%%%%%%%%%%%%%%%%%%%%%%%%%%%
%%%%%%%%%%%%%%%%%%%%%%%%%%%%%%%%%%%%%%%%%%%%%%%%%%%%%%%%%%%%%%%%%%%%%%%%%%%%%%%%%%%%%%%%%%%%%%%%%

\begin{proof}[\textbf{Proof of Theorem \ref{thm:R_consistency}}]
First, we look at the sufficient condition for joint consistency of $( \widehat{r}_{i,a,b} ,\widehat{r}_{i,\overline{\Pi}_{(a,b)}} )^{\top}$. Following the proof of Theorem \ref{thm:r_ab_consistency}, we have
 $cov(\widehat{r}_{i,a,b},\widehat{r}_{i,\overline{\Pi}_{(a,b)}}) = C_1 - C_2 $, where
\begin{align*}
C_1 = \int_{\overline{\Pi}_{(a,b)}} \int_a^b \frac{1}{T^4} \sum_{j_1,j_2,j_3,j_4} K_h(\omega - \omega_{j_1}) K_h(\omega - \omega_{j_2}) K_h(\lambda - \omega_{j_3}) K_h(\lambda - \omega_{j_4}) \; \times \\
 \sum_{r,s,t,u=1}^{d_i} E\Big( I_{ii,rs}(\omega_{j_1}) \overline{I_{ii,rs}(\omega_{j_2})} I_{ii,tu}(\omega_{j_3}) \overline{I_{ii,tu}(\omega_{j_4})} \Big) \; d\omega \; d\lambda \;\; \textrm{and}  \\
C_2 = \int_{\overline{\Pi}_{(a,b)}} \int_a^b \frac{1}{T^4} \sum_{j_1,j_2,j_3,j_4} K_h(\omega - \omega_{j_1}) K_h(\omega - \omega_{j_2}) K_h(\lambda - \omega_{j_3}) K_h(\lambda - \omega_{j_4}) \; \times \\
 \sum_{r,s,t,u=1}^{d_i} E\Big( I_{ii,rs}(\omega_{j_1}) \overline{I_{ii,rs}(\omega_{j_2})} \Big) E\Big( I_{ii,tu}(\omega_{j_3}) \overline{I_{ii,tu}(\omega_{j_4})} \Big) \; d\omega \; d\lambda.
\end{align*}
As in the proof of Theorem \ref{thm:r_ab_consistency}, it can be seen that, for the various cases, the covariance terms are of $O(\frac{1}{T^{\delta_1} h^{\delta_2}})$ where $\delta_1, \delta_2 \in  \{ 0,1,2,3 \}$ and $\delta_1 > \delta_2$. The result above along with Theorem \ref{thm:r_ab_consistency} implies 
\begin{equation*}
\Big( \widehat{r}_{i,a,b} ,\widehat{r}_{i,\overline{\Pi}_{(a,b)}}  \Big)^{\top} \; \xrightarrow[]{P} \; \Big( r_{i,a,b} , r_{i,\overline{\Pi}_{(a,b)}} \Big)^{\top}.
\end{equation*}
Finally, an application of the continuous mapping theorem yields the result. 
\end{proof}

\end{document}